\newtheorem{theorem} {Theorem}[section]
\newtheorem{lemma}[theorem]{Lemma}
\newcommand{\Caratheodory}{Carath\'eodory\xspace}
\newcommand{\Mat}{Matou\v sek}
\newtheorem*{azuma}{Azuma's Inequality}
\theoremstyle{plain}
\newtheorem{proposition}[theorem]{Proposition}
\newtheorem{ob}[theorem]{Observation}
\newcommand{\R}{\mathbb{R}}
\newcommand{\N}{\mathbb{N}}
\newcommand{\eps}{\varepsilon}
\newcommand{\mc}[1]{\mathcal{#1}\xspace}
\DeclareMathOperator{\conv}{conv}
\newcommand{\dual}[1]{\overline{#1}}
\newcommand{\tO}{\tilde{O}}
\newcommand{\ti}{^{(i)}}
\newcommand{\tii}{^{(i+1)}}
\newcommand{\lar}{{\mbox{\tiny large}}}
\newcommand{\E}{\mathbb{E}}
\newcommand{\IGNORE}[1]{}
\newcommand{\ignore}[1]{}
\newcommand{\ori}{\mathbf{0}}
\newcommand{\ray}[1]{\overrightarrow{#1}}
\newcommand{\hdtime}{\tilde{O}(n^{d/2 + 1})}
\title{Approximating the Simplicial Depth}
\author{%
  Peyman Afshani \textsuperscript{1}\thanks{Work supported in part by the Danish National Research
  Foundation grant DNRF84 through Center for Massive Data Algorithmics
(MADALGO).}\and
Donald R.\ Sheehy \textsuperscript{2}
\and
Yannik Stein \textsuperscript{3}\thanks{Supported by the Deutsche
    Forschungsgemeinschaft within the research training group ``Methods for
    Discrete Structures'' (GRK 1408).}
}
\newcommand{\instsep}[1]{\makebox[2em]{#1}}
\newcommand{\einstsep}[1]{\makebox[2em]{}}
\institution{%
  \instsep{1}MADALGO, Department of Computer Science,\\ 
  \einstsep{}Aarhus University, Denmark\\
  \einstsep{}\texttt{peyman@madalgo.au.dk}
\and
  \instsep{2}University of Connecticut, USA\\
  \einstsep{}\texttt{don.r.sheehy@gmail.com}
\and
  \instsep{3}Institut f\"ur Informatik, Freie Universit\"at Berlin, Germany\\
  \einstsep{}\texttt{yannik.stein@fu-berlin.de}
}
\begin{document}

\maketitle

\begin{abstract}
Let $P$ be a set of $n$ points in $d$-dimensions. The simplicial depth,
$\sigma_P(q)$ of a point $q$ is the number of $d$-simplices with vertices in $P$
that contain $q$ in their convex hulls.  The simplicial depth is a notion of
data depth with many applications in robust statistics and computational
geometry.  Computing the simplicial depth of a point is known to be a
challenging problem.  The trivial solution requires $O(n^{d+1})$ time whereas it
is generally believed that one cannot do better than $O(n^{d-1})$.

In this paper, we consider approximation algorithms for computing the simplicial
depth of a point.  For $d=2$, we present a new data structure that can
approximate the simplicial depth in polylogarithmic time, using polylogarithmic
query time.  In 3D, we can approximate the simplicial depth of a given point in
near-linear time, which is clearly optimal up to polylogarithmic factors.  For
higher dimensions, we consider two approximation algorithms with different worst-case
scenarios.  By combining these approaches, we compute a
$(1+\eps)$-approximation of the simplicial depth in time $\hdtime$ ignoring
polylogarithmic factor.
All of these algorithms are Monte Carlo algorithms.
Furthermore, we present a simple strategy to compute the simplicial depth exactly in 
$O(n^d \log n)$ time, which provides the first improvement over the trivial
$O(n^{d+1})$ time algorithm for $d>4$.
Finally, we show that computing the simplicial depth
exactly is \#P-complete and W[1]-hard if the dimension is part of the input.
\end{abstract}

\section{Introduction}
Let $P \subset \R^d$ be a point set and $q \in \R^d$ be a point.  The \emph{simplicial
  depth}~\cite{RousseeuwRu1996} $\sigma_P(q)$ of $q$ with respect to $P$ is the
number of subsets $P' \subseteq P$, $|P'|=d+1$, that contain $q$ in their convex
hull (see also~\cite{burrsimplicial} for an alternate definition).  
This is one of the important definitions of data depth and has generated interest in both robust statistics and computational geometry
since its introduction.  Designing efficient algorithms to compute (or
approximate) the simplicial depth of a point remains an intriguing task
in this area.

Other notions of depth include halfspace (a.k.a. Tukey) depth, Oja depth, regression depth
and convex hull peeling depth~\cite{aloupissurvey}.
Among them, the one most relevant to our techniques is the Tukey depth: given a set
$P$ of $n$ points in $\R^d$, the Tukey depth, $\tau_P(q)$ of $q$ with respect to $P$
is the minium number of points contained in a halfspace that also contains $q$.

\subparagraph{Previous and Related Results.}
Computing the simplicial depth of a single point in 2D was
considered even before its formal definition~\cite{KhullerJo1990} almost three decades ago,
perhaps because it translates into an ``intuitive'' problem of counting the
number of triangles containing a given point.
In fact, at least three independent papers study this problem in 2D and show how
to compute the simplicial depth in $O(n\log n)$
time~\cite{GilStWi1992,KhullerJo1990,RousseeuwRu1996}.
This running time is optimal~\cite{AloupisCoGoSoTo2002}.
In 2003, Burr et al.~\cite{burrsimplicial} presented an alternate definition for the simplicial
depth to overcome some unpleasant behaviors that emerge when dealing with degeneracies. 
Since we will be dealing with approximations, we will assume general position and thus avoid
issues with degeneracy. 
In 3D, the first non-trivial result offered the bound of $O(n^2)$~\cite{GilStWi1992} but
it was flawed; fortunately, the running time of $O(n^2)$ could still be obtained with proper
modifications~\cite{ChengOu2001}.
The same authors presented an algorithm with running time of $O(n^4)$ in 4D.
For dimensions beyond $4$ there seems to be no
significant improvements over the trivial $O(n^{d+1})$ brute-force solution.
Furthermore, it is natural to  conjecture that computing the simplicial depth should require
$\Omega(n^{d-1})$ time:
given a set $P$ of $n$ points, it is generally conjectured that detecting whether or not $d+1$ points lie
on a hyperplane requires $\Omega(n^{d})$ time~\cite{ericksonLB} and this conjecture would imply that detecting whether $d$ points of $P$ and a fixed point
$q$ lie on a hyperplane should require $\Omega(n^{d-1})$ time. 
This is one motivation to consider the approximate version of the problem.
In fact, Burr et al.~\cite{burrsimplicial} have already expressed interest in computing an 
approximation to the simplicial depth and they propose a potential approach, although without any
worst-case analysis~\cite{burrtechnical}. In 2007, Bagchi et
al.~\cite{bagchi2007deterministic} presented a data structure for the
two-dimensional case: using $O(n\, \text{polylog}\, n)$ preprocessing time, they
can additively $\eps$-approximate the simplicial depth of a given query point in
$O(1)$ time.

Here, we  only consider relative approximation; additive approximation (with
additive error of $\varepsilon n^{d+1}$) can be obtained using
$\varepsilon$-nets and $\varepsilon$-\setlength{\hspace}{\widthof{ }*\real{-1}}
approximations (see~\cite{Chazelle2000,bagchi2007deterministic} for more details).

Another motivation for computing a relative approximation comes from applications in
outlier removal. Intuitively, statistical depth measures how deep a point is embedded
in the data cloud with outliers corresponding to points with small values of depth. 
In such applications, if a small relative error of $(1+\varepsilon)$ is tolerable, then
faster outlier removal can be possible using approximations.

There are several notions of data depth for which approximation and related
computational problems have been considered.
Aronov and Har-Peled~\cite{AronovHa2008} describe general techniques
to attack approximation problems related to various notions of depth including finding an approximately
deepest point in an arrangement of pseudodisks, approximating the depth of a query point in 
an arrangement of pseudodisks, approximate halfspace range counting and an approximate version of linear 
programming with violations which both can be formulated to depth-related problems in an arrangement of halfspaces.

Approximate halfspace range counting received most of the 
attention~\cite{Afshani.Chan.SOCG07,AHZ.UB.CGTA,KaplanSh2006} but this also renewed 
interest on the general study of relative approximations~\cite{Aronov.et.al.SOCG07}.
Continuing this line of research Afshani and Chan presented data structures to 
approximate the depth of a query point in an input set of points for Tukey depth
in 3D and regression depth in 2D~\cite{Afshani.Chan.SOCG07}. 

\paragraph*{Our Results.} 
In Sections~\ref{sec:ldim} and \ref{sec:3d}, we consider the simplicial depth
problem in $2$ and $3$ dimensions. For $d=2$, we present a data structure of
size $\tO(n)$\footnote{The $\tO(\cdot)$ notation hides
a constant number of polylogarithmic factors of $n$, e.g., $\log n = \tO(1)$.} with
$\tO(n)$ preprocessing that returns the relative $\eps$-approximation of
the simplicial depth of a query point $q$ with high probability in $\tO(1)$
time, where $\eps > 0$ is an arbitrary constant.
In Section~\ref{sec:hdim}, we consider the simplicial depth problem in arbitrary
but fixed dimensions. We present two algorithms that each compute a
$(1+\eps)$-approximation, however with different worst-case scenarios.
A combination of these strategies gives an algorithm that returns a
$(1+\eps)$-approximation of the simplicial depth with high probability in
$\hdtime$ time.
Finally, we show in Section~\ref{sec:complexity} that computing
the simplicial depth becomes \#P-complete and W[1]-hard with respect to the
parameter $d$ if the dimension is part of the input.

\paragraph*{Technical Difficulties.}
One standard technique to approximate various geometric measures is
the use of uniform random samples combined with Chernoff type inequalities.
Often uniform random sampling enables us to approximate the depth with high
probability if the depth lies in a certain range.
This property is exploited by building a hierarchy of random samples that cover all possible
ranges of data depth
(see~\cite{Afshani.Chan.SOCG07,AronovHa2008,Chan.Range.Reporting.SJC00,KaplanSh2006}).
However, these existing techniques seem insufficient to approximate the
simplicial depth.
One particular troubling situation is depicted in Figure~\ref{fig:badsample}:
in this configuration no high probability bound can be achieved
for the simplicial depth of $q$ in a uniform random sample despite the fact that
$q$ is far from being an outlier (it has Tukey depth $\Theta(n^{1/3})$ and
simplicial depth $\Theta(n^2)$).
This poses serious problems for all the previous techniques including
the general techniques of Aronov and Har-Peled~\cite{AronovHa2008}, 
Afshani and Chan~\cite{Afshani.Chan.SOCG07} and Kaplan and 
Sharir~\cite{KaplanSh2006}.
In fact, it can be seen that the definition of simplicial depth
prevents us from using any technique which depends on a Chernoff-type
inequality.

\begin{figure}[thpb]
	\begin{center}
		\includegraphics[scale=0.5]{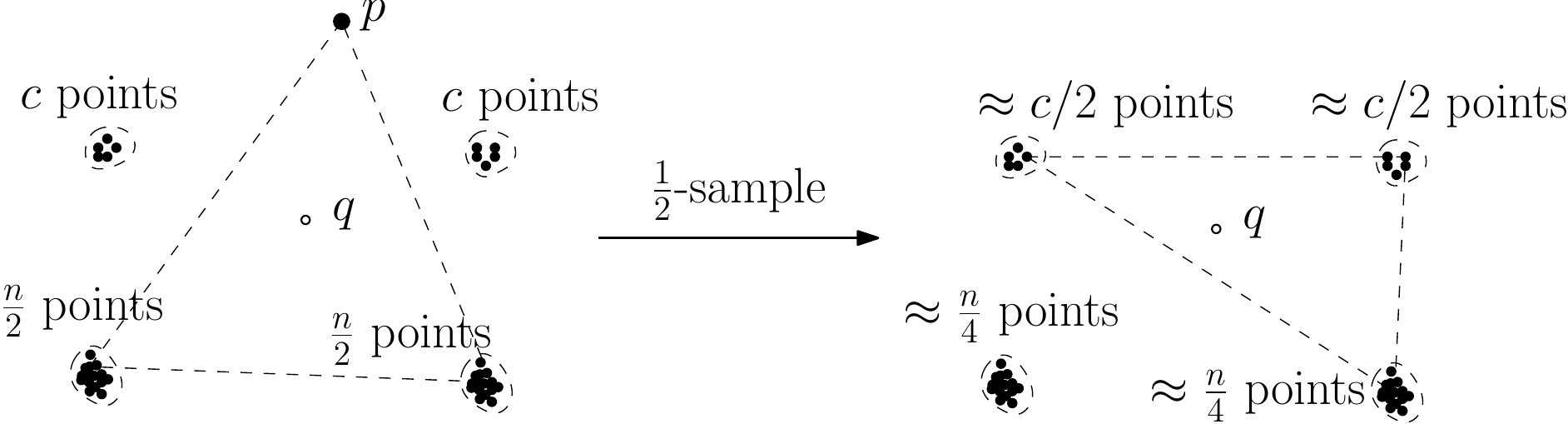}
	\end{center}
    \caption{(a) The point $q$ has simplicial depth $\Theta(n^2)$ for $c=O(1)$.
      (b) The random sample misses point $p$ and the $n^2/4$ triangles that shared $p$ as a vertex; the simplicial
	depth of $q$ is now $O(n^{4/3})$.
	}
	\label{fig:badsample}
\end{figure}

\section{Structural Theorems and Preliminaries}\label{sec:struct}
In this section, we provide non-algorithmic results and review properties of
simplicial depth which can be of independent interest. In the following sections,
these results are used extensively to approximate the simplicial depth.

\subsection{Notation}
Throughout this paper, we denote the input set of $n$ points in
$\R^d$ (for a constant $d$) by $P$ and the query point by $q$.
For a given set of points $P$, the Tukey depth of a point $q$
with respect to $P$, denoted by $\tau_P(q)$, is the minimum number of points of
$P$ that is contained in a halfspace $h$ with $q \in h$.
We will always assume the set $P\cup \{q\}$ is in general position.
We denote with $\Delta_P(q)$ all subsets $P' \subseteq P$, $|P'| = d+1$, that
contain $q$ in their convex hull.
For a point $p \in \R^d$, let $\Delta_{p;P}(q) \subseteq \Delta_P(q)$ be the set of
simplices formed by $p$ and $d$ points in $P$ that contain $q$.
We define the weight function $\omega_{q;P}(p) = |\Delta_{p;P}(q)|$
(we might omit $P$ in the subscript if there is no danger of ambiguity).
Note that we have $\sum_{p\in P} \omega_{q;P}(p) = (d+1)\sigma_P(q)$.
Given any two points $v$ and $u$, we denote the line passing through them
by $\overline{uv}$ and the ray from $v$ and through $u$ by
$\overrightarrow{vu}$.
We call a set $S$ an $\alpha$-sample of a set $U$ if every
element of $U$ is chosen uniformly and randomly in $S$ with 
probability $\alpha$.
We will say an event $X$ happens with high probability if 
$\Pr [ X ] \ge 1 - n^{-c} $ in which $c$ is some (large) constant.
\subsection{Bounding the Simplicial Depth with Tukey Depth}
The main result of this section are tight asymptotic bounds between $\sigma_P(q)$ and $\tau_P(q)$.
These results will be useful later on, but we believe they might be of independent interest as well.
We start with the following easy lemma.

\begin{lemma}\label{lem:change}
  Let $P\subset \R^d$ be a point set and $q \in \R^d$.
	Let $S\subseteq P$, $|S|=d+1$, be a subset with $q\in \conv(S)$.
    For every point $p \in P$, there exists a unique vertex $v$ of $S$ such that
    the simplex $S\cup \left\{ p \right\}\setminus \left\{ v \right\}$ 
	contains $q$. 
\end{lemma}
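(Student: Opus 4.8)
The plan is to fix the simplex $S = \{v_0, \dots, v_d\}$ with $q \in \conv(S)$ and, for a given point $p \in P$, characterize exactly which vertices $v_i$ can be swapped out for $p$ while keeping $q$ inside. The natural tool here is barycentric coordinates: since $S$ is a $d$-simplex containing $q$, I would write $q = \sum_{i=0}^d \lambda_i v_i$ with $\lambda_i > 0$ and $\sum_i \lambda_i = 1$ (strict positivity because $P \cup \{q\}$ is in general position, so $q$ lies in the interior of $\conv(S)$ and on no facet). Likewise write $p = \sum_{i=0}^d \mu_i v_i$ with $\sum_i \mu_i = 1$; here the $\mu_i$ are real but need not be positive, and again by general position none of them is zero.

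The key computation is to determine, for each $j \in \{0,\dots,d\}$, the barycentric coordinates of $q$ with respect to the new simplex $S_j := S \cup \{p\} \setminus \{v_j\}$, i.e. with respect to the vertex set $\{v_i : i \neq j\} \cup \{p\}$. Substituting $v_j = \mu_j^{-1}\bigl(p - \sum_{i \neq j} \mu_i v_i\bigr)$ into the expression for $q$ and collecting terms, one finds that $q$ is an affine combination of $\{v_i : i \neq j\} \cup \{p\}$ in which the coefficient of $p$ is $\lambda_j / \mu_j$ and the coefficient of $v_i$ (for $i \neq j$) is $\lambda_i - \mu_i \lambda_j / \mu_j$. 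Then $q \in \conv(S_j)$ if and only if all of these coefficients are positive (by general position again, none is zero, so nonnegativity and positivity coincide). The coefficient of $p$ being positive forces $\mu_j > 0$ (as $\lambda_j > 0$); conversely, when $\mu_j > 0$ one checks that for each $i \neq j$ the quantity $\lambda_i - \mu_i \lambda_j/\mu_j$ is positive — this uses $\sum_i \mu_i = 1$ together with the positivity of the $\lambda_i$'s. So I would show the equivalence: $q \in \conv(S_j) \iff \mu_j > 0$.

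It then remains to argue that exactly one index $j$ has $\mu_j > 0$. This is precisely where general position does the real work, and I expect it to be the main point requiring care. Since $\sum_i \mu_i = 1$, at least one $\mu_j$ is positive, so there is at least one valid swap. For uniqueness, suppose $\mu_j, \mu_k > 0$ for distinct $j, k$; I would derive a contradiction by showing this puts $q$, or rather $p$, into a forbidden configuration — concretely, if two coordinates of $p$ are positive, the computation above would make $q$ lie in two distinct facet-adjacent simplices, which forces $p$ to lie in a halfspace configuration contradicting general position of $P \cup \{q\}$. Alternatively, and perhaps more cleanly, note that the ray from $q$ in the direction away from $p$ (or the line $\overline{pq}$) exits $\conv(S)$ through a single facet of $S$ by general position; the vertex $v_j$ opposite that facet is the unique one that can be replaced. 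I would phrase the final write-up using whichever of these two arguments — the barycentric sign analysis or the "exit facet of the line $\overline{pq}$" picture — turns out to be shorter, but the barycentric route has the advantage of making the uniqueness count $\mu_j > 0$ almost immediate once general position rules out zero coordinates.
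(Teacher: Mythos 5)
Your barycentric set-up is correct up to the point where you compute the coordinates of $q$ with respect to $S_j := S \cup \{p\} \setminus \{v_j\}$: the coefficient of $p$ is indeed $\lambda_j/\mu_j$ and the coefficient of $v_i$ ($i\neq j$) is $\lambda_i - \mu_i\lambda_j/\mu_j$. But the claimed equivalence ``$q\in\conv(S_j)\iff\mu_j>0$'' is false, and so is the premise of your uniqueness step that at most one $\mu_j$ can be positive. Only the forward implication holds. For a counterexample to the converse, take $d=2$, $\lambda_0=\lambda_1=\lambda_2=1/3$, $\mu_0=0.6$, $\mu_1=0.5$, $\mu_2=-0.1$. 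Then $\mu_1>0$, yet the coefficient of $v_0$ in $S_1$ is $\lambda_0-\mu_0\lambda_1/\mu_1 = 1/3 - (0.6)(2/3) = -1/15 < 0$, so $q\notin\conv(S_1)$. Moreover both $\mu_0,\mu_1>0$ here, and if $p$ were interior to $\conv(S)$ \emph{all} $\mu_i$ would be positive, so ``exactly one $\mu_j>0$'' cannot be what general position buys you; the contradiction you propose to derive from two positive $\mu_j$'s does not exist.

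The correct barycentric characterization is: for $\mu_i<0$ the coefficient $\lambda_i-\mu_i\lambda_j/\mu_j$ is automatically positive once $\mu_j>0$, while for $\mu_i>0$ positivity is the inequality $\lambda_i/\mu_i > \lambda_j/\mu_j$. Hence $q\in\conv(S_j)$ iff $j$ is the argmin of $\lambda_i/\mu_i$ over $\{i:\mu_i>0\}$. At least one $\mu_i$ is positive (they sum to $1$), and general position makes these ratios distinct, so the argmin is unique — giving existence and uniqueness simultaneously. If you fix the lemma this way, the barycentric route works, and it is genuinely different from the paper's argument. The paper instead shoots the ray from $q$ in direction $q-p$, observes it exits $\conv(S)$ through a unique facet (by general position), and takes $v$ to be the vertex opposite that facet; this is exactly the fallback you sketch in your last paragraph, and the two approaches are equivalent (the exit facet is the one whose opposite vertex minimizes $\lambda_i/\mu_i$). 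The geometric version is shorter and avoids the sign case analysis, but the corrected barycentric version is a clean, fully explicit alternative.
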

\begin{proof}
	Consider the ray $r$ starting from $q$ in direction $q-p$.
    Since $q$ is inside $\conv(S)$, $r$ intersects a facet of $\conv(S)$.
	An easy calculation reveals the vertex $v$ opposite to this facet is the vertex claimed in the lemma.
    Furthermore, $v$ is unique since we assume general position.
\end{proof}

We call the operation used in the above lemma \emph{swapping}.
The main result of this section is the following.

\begin{lemma}\label{lem:SH}
	For any point set ${P}\subset \R^d$ and any $q\in \R^d$,
	$\sigma_{P}(q) = \Omega(|{P}|\tau_{P}^d(q))$ and $\sigma_{P}(q) = O(|{P}|^d\tau_{P}(q))$. 
	Furthermore, these bounds are tight.
\end{lemma}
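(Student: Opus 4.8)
The plan is to prove the four claims separately: the lower bound $\sigma_P(q) = \Omega(|P|\tau_P^d(q))$, its tightness, the upper bound $\sigma_P(q) = O(|P|^d\tau_P(q))$, and its tightness. Write $n = |P|$ and $t = \tau_P(q)$.

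\emph{Lower bound.} The key is the swapping operation from Lemma~\ref{lem:change}. First I would observe that $\sigma_P(q) > 0$ whenever $q \in \conv(P)$, which holds as long as $t \ge 1$ (otherwise there is a halfspace through $q$ missing all of $P$, so $q\notin\conv(P)$ and the bound is trivial); indeed $q\in\conv(P)$ implies $q$ lies in some simplex on $d+1$ vertices of $P$ by \Caratheodory's theorem. Fix one such simplex $S_0$. Now here is the counting idea: consider sequences of $d$ swaps starting from $S_0$, where at step $j$ we pick a point $p_j \in P$ and swap it into the current simplex, each swap being forced (the vertex it replaces is determined by Lemma~\ref{lem:change}). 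I would argue that we can always choose the swap-in points so that after all $d$ swaps we have replaced \emph{all} $d+1$ original vertices — wait, that is $d+1$ swaps, not $d$; more carefully, I would aim to show that the number of $q$-containing simplices reachable is at least $\prod_{j} (t - j + O(1))$ over roughly $d$ steps, because at each step the point being swapped in must be chosen from a halfspace-type constraint guaranteeing the new simplex still contains $q$, and the Tukey depth lower bound $t$ guarantees $\Omega(t)$ valid choices at each step that produce \emph{distinct} simplices. The extra factor of $n$ comes from the first (unconstrained) choice of a vertex to seed a fresh coordinate. Making the distinctness and the ``$\Omega(t)$ valid choices'' steps precise is the main obstacle: one must set up the swaps so that the simplices produced along different choice-sequences are genuinely different and each genuinely contains $q$, which likely requires fixing an ordering of the swapped coordinates and arguing via the halfspaces certifying $\tau_P(q) = t$.

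\emph{Upper bound.} This direction is easier. Recall $\sum_{p \in P}\omega_{q;P}(p) = (d+1)\sigma_P(q)$, so it suffices to show $\omega_{q;P}(p) = O(n^{d-1} t)$ for every $p$ — then $\sigma_P(q) = O(n^d t)$ after dividing by $d+1$ and multiplying by $n$. To bound $\omega_{q;P}(p)$: a simplex counted by $\omega_{q;P}(p)$ has $p$ as a vertex and contains $q$, so $q$ lies in $\conv(\{p\} \cup P')$ for some $d$-subset $P'$; equivalently the ray $\ray{pq}$ extended past $q$ must stab $\conv(P')$, i.e. $P'$ together with the halfspace bounded by the hyperplane through $q$ normal to... more simply, projecting $P$ from $p$ onto a hyperplane (central projection), $q$ maps to a point $q'$ and the $q$-containing simplices through $p$ correspond to $(d-1)$-simplices on the projected points containing $q'$ on the correct side; this is a $(d-1)$-dimensional simplicial-depth-like count, and I would bound it by $O(n^{d-1})$ trivially, then shave a factor using that the Tukey depth of $q'$ in the projection is $O(t)$ (a halfspace through $q$ missing $t$ points of $P$ projects to one missing $O(t)$ points). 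An induction on $d$ using the trivial base case, or a direct halfspace-counting argument, should close this.

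\emph{Tightness of both bounds.} For the lower bound's tightness ($\sigma_P(q) = O(n t^d)$ is achievable), I would take $t$ points clustered tightly near $q$ in ``all directions'' (e.g. a small simplicial-depth-$t$ configuration) and the remaining $n - O(t)$ points placed far away in a single narrow cone so that almost no simplex using the far points contains $q$; then essentially only simplices using the clustered points count, giving $O(t^{d+1}) \cdot$(combinatorial factor)$= O(n t^d)$ when one far point is allowed per simplex — I would tune the construction so the count matches. For the upper bound's tightness ($\sigma_P(q) = \Omega(n^d t)$ is achievable), the configuration in Figure~\ref{fig:badsample} is essentially the right example: $t$ points near $q$ on one side and $\Theta(n)$ points arranged on each of two far clusters so that any $d$-subset split appropriately between the clusters forms a simplex containing $q$ together with one of the $t$ near points, yielding $\Omega(t) \cdot \Omega(n^{d-1})\cdot\Omega(n) $ after accounting correctly $=\Omega(n^d t)$. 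I expect the bookkeeping in these two constructions — verifying containment and counting without over- or under-counting — to be routine but slightly delicate, and I would present them with a small figure.
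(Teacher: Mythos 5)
Your proposal takes a genuinely different route on all four parts and has gaps in each; the paper's proof is both simpler and relies on a tool you do not invoke.

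\emph{Lower bound.} You yourself flag that the swap-sequence idea is not made precise, and that is indeed a real obstacle: ensuring $\Omega(\tau)$ valid choices at each step while also ensuring distinctness of the resulting simplices is where the argument would have to live, and you do not supply it. The paper sidesteps this entirely by first invoking Carath\'eodory repeatedly to peel off $m\ge \tau_P(q)/d$ vertex-disjoint simplices through $q$, and then applying the \emph{Colorful \Caratheodory} theorem (B\'ar\'any's technique) to their union $A$ (of size $O(\tau)$) to get $\sigma_A(q)=\Omega(\tau^{d+1})$ outright. Only then does the swap lemma enter: for each $p\in P\setminus A$, swapping $p$ into each of those $\Omega(\tau^{d+1})$ simplices, with a multiplicity loss of at most $|A|$, yields $\Omega(\tau^d)$ distinct simplices containing $p$ and otherwise inside $A$, and these families are pairwise disjoint over $p\in P\setminus A$. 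Your sketch is missing the Colorful \Caratheodory bootstrap, which is the key idea.

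\emph{Upper bound.} Your proposed bound $\omega_{q;P}(p)=O(n^{d-1}\tau_P(q))$ for \emph{every} $p$ is false. Take $\tau_P(q)=1$ with the unique witness $p_0$ in a minimizing halfspace and the other $n-1$ points positioned so that many $d$-subsets together with $p_0$ surround $q$; then $\omega_{q;P}(p_0)=\Theta(n^d)$, not $O(n^{d-1})$. Relatedly, your claim that central projection from $p$ sends a Tukey-witnessing halfspace for $q$ to a low-count halfspace for $q'$ fails when $p$ lies strictly inside that halfspace (projection from $p$ only preserves halfspaces whose boundary passes through $p$). The paper's argument is one line and avoids the per-point decomposition entirely: fix a halfspace $h$ through $q$ with $\tau_P(q)$ points; every $q$-containing simplex uses at least one vertex from $h\cap P$, and there are at most $\tau_P(q)\cdot|P|^d$ ways to choose a vertex from $h\cap P$ and $d$ further vertices. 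If you want to salvage the weight-sum route, you would need to split on whether $p\in h$ and treat the $\tau$ points of $h$ separately.

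\emph{Tightness.} The paper's constructions are crisp: place a tiny simplex with $q$ in its interior, then blow up each vertex into a cluster. Clusters of sizes $(m,n,\dots,n)$ give $\sigma=\Theta(mn^d)$ and $\tau=m$ (tightness of the upper bound); clusters of sizes $(m,\dots,m,n)$ give $\sigma=\Theta(m^dn)$ and $\tau=m$ (tightness of the lower bound). Your proposed configurations (a near-$q$ cluster plus a single narrow cone, or ``two far clusters'') are both vaguer and harder to verify, and in the first case it is not clear you can keep the Tukey depth at exactly $t$ while also keeping the far points ineffective. I would replace both with the cluster construction.
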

\begin{proof}
	If $\tau_{P}(q) = 0$, then $q$ is outside the convex hull of ${P}$ and there is 
	nothing left to prove, so assume otherwise.
	By \Caratheodory's theorem, there exists a simplex $\Delta_1$ formed by $d+1$ points of 
	${P}$ which contains $q$.
	Removing $\Delta_1$ from ${P}$ reduces the Tukey depth of $q$ by at most $d$ and 
	by repeating this operation we can find $m$ disjoint subsets 
    ${\Delta}_1, \dots, {\Delta}_m \subset {P}$ where $m \ge \frac{\tau_{P}(q)}{d}$.
	Let ${A}= \bigcup {\Delta}_i$.
	Using the Colorful \Caratheodory Theorem, and 
    following the exact same technique as B\'ar\'any~\cite{Barany1982}, it follows that
	$\sigma_{A}(q) = \Omega(m^{d+1})=\Omega(\tau_{P}^{d+1}(q))$.
    We assume $|{A}| \le  |{P}|/2$, otherwise ${\Delta}_A(q)$ already contains
	$\Omega(|{P}|^{d+1})$ simplices and there is nothing left to prove.
	Using Lemma~\ref{lem:change}, for any 
	$p \in {P}\setminus {A}$ and for every simplex $\Delta \in \Delta_A(q)$,
	we can create another simplex $\Delta'$ by adding $p$ and removing some other point.
	This way, we can create $\sigma_A(q)$ simplices for every point $p$.
	However, some of these simplices could be identical. 
	The main observation is that any simplex $\Delta'$, can be obtained through at most
	$|A|$ different ways.
	Thus, it follows that there are at least 
    $\Omega(|\Delta_A(q)|/|{A}|)= \Omega(\tau_{P}^d(q))$
    distinct simplices $\Delta_{p, A}$ with $p$ as a vertex and no
    other point from $P \setminus A$.
	For two points $p$ and $p'$ in ${P}\setminus {A}$, the sets $\Delta_{p,
    A}$ and $\Delta_{p', A}$ are disjoint.
	Hence, in total we have produced $\Omega(|{P}|\tau_{P}^d(q))$ distinct simplices.
	
	To prove the upper bound, consider a halfspace $h$ that passes through $q$ 
	and contains $\tau_{P}(q)$ points. 
	Every simplex containing $q$ must have at least one point from 
	$h\cap P$. The maximum number of such simplices is at most $|{P}|^d\tau_{P}(q)$.

	To demonstrate the tightness of the upper bound, consider a simplex $\Delta$ and a point
	$q$ inside it.
	Replace one vertex of the simplex with a cluster of $m$ points placed closely to
	each other and replace all the remaining vertices with clusters of $n$ points.
	The resulting point set ${P}_1$ will contain $\Theta(n)$ points with 
	$\sigma_{{P}_1}(q) = m n^d$ and $\tau_{{P}_1}(q) = m$.
	The tightness of the lower bound is realized by a very similar construction but using
	clusters of size $m$ at every vertex except one, and using a cluster of size $n$ at 
	the remaining vertex.
	The resulting point set ${P}_2$ will contain $\Theta(n)$ points with 
	$\sigma_{{P}_2}(q) = m^d n$.
\end{proof}

\subsection{Properties of Random Samples}
A big obstacle to approximating the simplicial depth is that
it is not easy to estimate the simplicial depth from a random sample:
the probability that a simplex $\Delta$ survives in an $\alpha$-sample of the point set
is exactly $\alpha^{d+1}$ but these probabilities can be highly dependent for different simplices.
Because of this, we need proper tools to deal with such dependence. 
One such tool is Azuma's inequality.

\begin{azuma}\label{thm:azuma}
	Suppose $\{X_k\}$ is a martingale with the property that
	$|X_{i}-X_{i-1}| \le c_i$. 
	Then
    $$\Pr[ |X_n - X_0| \ge t ] \le e^{-\frac{t^2}{2\sum_{k=1}^n c_k^2}}.$$
\end{azuma}

The following lemma is our main tool for estimating the simplicial depth from random samples.

\begin{lemma}\label{lem:large.HS}\label{lem:RS}
	Let ${P} \subset \R^d$ be a point set of size $n$ and $q\in \R^d$ an arbitrary
	point.
	There exists a universal constant $C$ such that for any parameter $\eps > 0$ the following
	holds.
    Set $M = C^{-1} \varepsilon^2 \sigma_{P}(q) / \log n$ and pick
    $P_\lar$ as a subset of $P$ that includes all points $p$ with $\omega_{q;P}(p) \ge M$.	
	Build a sample $S \subseteq P$ by adding a $1/2$-sample of $P \setminus P_\lar$ to $P_\lar$.
    Then, the event $|\E(\sigma_{S}(q))- \sigma_{S}(q)| \ge \varepsilon \sigma_P(q)$ holds
	with high probability.
\end{lemma}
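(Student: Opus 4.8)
The plan is to expose $\sigma_S(q)$ as the final value of a Doob martingale on the random choices in $S$ and then apply Azuma's inequality. The only randomness is the $1/2$-sampling of $P \setminus P_\lar$; order the points of $P \setminus P_\lar$ arbitrarily as $p_1,\dots,p_k$, and let $X_0 = \E[\sigma_S(q)]$, $X_j = \E[\sigma_S(q) \mid$ the inclusion decisions for $p_1,\dots,p_j]$, so $X_k = \sigma_S(q)$ and $\{X_j\}$ is a martingale. The point of choosing the threshold $M$ is to control the bounded-difference constants $c_j$: exposing whether $p_j \in S$ changes the conditional expectation of $\sigma_S(q)$ by at most the (conditional) expected number of surviving simplices through $p_j$, which is at most $\omega_{q;P}(p_j) \le M$ since $p_j \notin P_\lar$. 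Hence we may take $c_j = M$ for all $j$.

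Next I would plug into Azuma: with $t = \varepsilon \sigma_P(q)$ and $\sum_j c_j^2 \le nM^2$, we get
\[
\Pr\bigl[\,|\sigma_S(q) - \E[\sigma_S(q)]| \ge \varepsilon\sigma_P(q)\,\bigr]
\;\le\; \exp\!\left(-\frac{\varepsilon^2 \sigma_P^2(q)}{2 n M^2}\right).
\]
Substituting $M = C^{-1}\varepsilon^2 \sigma_P(q)/\log n$ makes the exponent equal to $-\tfrac{C^2 \log^2 n}{2\varepsilon^2 n}\cdot\tfrac{\sigma_P^2(q)}{\sigma_P^2(q)}$... more carefully, the exponent becomes $-\dfrac{\varepsilon^2\sigma_P^2(q)\,C^2\log^2 n}{2n\,\varepsilon^4\sigma_P^2(q)} = -\dfrac{C^2\log^2 n}{2\varepsilon^2 n}$, which is not obviously $\Omega(\log n)$. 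To fix the bookkeeping I would instead bound $\sum_j c_j^2$ more tightly by $M \sum_{p \in P\setminus P_\lar}\omega_{q;P}(p) \le M\cdot(d+1)\sigma_P(q)$, using $c_j \le M$ on one factor and summing the weights on the other. Then the exponent is
\[
-\frac{\varepsilon^2 \sigma_P^2(q)}{2(d+1)\,M\,\sigma_P(q)} \;=\; -\frac{\varepsilon^2 \sigma_P(q)}{2(d+1)M} \;=\; -\frac{C\log n}{2(d+1)},
\]
so choosing the universal constant $C$ large enough (relative to $d$ and the desired high-probability exponent $c$) gives the bound $\le n^{-c}$.

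I expect the main obstacle to be the bounded-difference estimate, i.e.\ justifying $|X_j - X_{j-1}| \le M$ rigorously: one must argue that changing the single decision about $p_j$ (conditioned on all other fixed or to-be-revealed decisions, in expectation over the latter) alters $\sigma_S(q)$ by at most the number of simplices of $\Delta_P(q)$ that use $p_j$ as a vertex. A clean way is the standard coupling: let $S$ and $S'$ differ only in whether $p_j$ is present; every simplex in $\Delta_S(q)\triangle\Delta_{S'}(q)$ must contain $p_j$, so the symmetric difference has size at most $\omega_{q;P}(p_j)\le M$, and taking conditional expectations preserves the bound. The other mild subtlety is confirming that $P_\lar$ is well-defined and that the weights $\omega_{q;P}(p)$ are exactly the per-point degree used above, both of which follow directly from the definitions and the identity $\sum_{p\in P}\omega_{q;P}(p) = (d+1)\sigma_P(q)$ stated earlier. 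Everything else is a routine substitution once the constant $C$ is fixed.
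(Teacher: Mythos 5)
Your proposal is correct and follows essentially the same route as the paper: a Doob martingale exposing the $1/2$-sample decisions on $P\setminus P_\lar$, the bounded-difference bound $c_j\le\omega_{q;P}(p_j)\le M$, the tightened estimate $\sum_j c_j^2 \le M\sum_j c_j = O(M\sigma_P(q))$ (the step that actually makes the exponent come out to $-\Omega(C\log n)$, as you correctly noticed the naive $\sum c_j^2\le nM^2$ does not), and then Azuma. The only stylistic difference is that you justify $c_j\le\omega_{q;P}(p_j)$ by a coupling argument on the symmetric difference $\Delta_S(q)\triangle\Delta_{S'}(q)$, whereas the paper does it by a direct case analysis on whether $p_j$ is sampled and tracking how the survival probability $2^{-(t+1)}$ of each simplex through $p_j$ changes; both are sound, and yours is arguably the cleaner packaging of the same fact.
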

\begin{proof}
	Let $P' = P\setminus P_\lar$ and let $S'$ be a $1/2$-sample of $P'$.
	By construction, we have $S = S' \cup P_\lar$.
	Let $p_1, \dots, p_{n'}$ be an ordering of the points of ${P'}$.
	We build a martingale by revealing presence or absence of points
	of $P'$ in $S'$ in this order.
	Define $x_i = 1$ if $p_i$ is sampled in $S'$ and 0 otherwise.
	Let $X_i$ be the random variable corresponding to the expected value
	of $\sigma_{S}(q)$ in which the values of $x_1, \dots, x_i$ have been revealed. That is, 
	the expectation is taken over $x_{i+1}, \dots, x_{n'}$.
	According to this definition, $X_{n'}$ is equal to $\sigma_{S}(q)$, since
    we have revealed all the points in our sample, while 
	$X_0$ is equal to $\E(\sigma_{S}(q))$, since we have revealed nothing.

	The sequence $X_0, \dots, X_n$ has the martingale property:
    \begin{align*}
      \E(X_{i+1} | X_1, \cdots, X_i) = \E(X_{i+1} | X_i)
    \end{align*}
    and furthermore, the difference between
	$X_{i+1}$ and $X_{i}$ is the knowledge of $x_{i+1}$. However,
    a simple calculation reveals that the contribution of a simplex with
    $p_{i+1}$
    as a vertex is exactly the same in both $\E(X_{i+1} | X_i)$ and $\E(X_i)$. Thus,
	$\E(X_{i+1} | X_i) = \E(X_i)$.

    Let $c_i=| X_{i} - X_{i-1}|$ as in Azuma's inequality.
    We now show that $c_i$ is at most $\omega_{q;P}(p_{i}) < M$.
    If $p_{i}$ is not sampled, then any simplex with $p_{i}$ as a vertex
    will not survive and thus their contribution to the expected value of
    $\sigma_{S}(q)$ will be zero, a decrease of at most $\omega_{q;P}(p_{i})$ in the expected value.
    So assume $p_{i}$ is part of the sample, and consider a simplex $\Delta$ that contains $q$ and is composed
    of $p_{i}$, $t$ points $p'_1, \cdots, p'_t \in \left\{  p_{i+1}, \cdots,
      p_{n'}\right\}$, and $d-t$ 
    point $p''_1, \cdots, p''_{d-t} \in \left\{ p_1, \cdots, p_{i-1} \right\}$.
    If any of the points $p''_1, \cdots, p''_{d-t}$ have not been sampled, then the contribution of
    $\Delta$ to the expected value of $\sigma_{S}(q)$ is zero so assume we have revealed
    that all these points have been sampled in $S$.
    The contribution of $\Delta$ to the expected value of $\sigma_{S}(q)$ before revealing
    that $x_{i} = 1$ was exactly $2^{-(t+1)}$, equal to the probability that we sample all the
    points $p'_1, \cdots, p'_t$ and $p_{i}$. After revealing $x_{i} =1$, this contribution
    increases to $2^{-t}$.
    Clearly, over all simplices with $p_{i}$ as a vertex, this increase is at
    most $\omega_{q;P}(p_{i})$.
    Since $p_{i} \not \in P_\lar$, the magnitude of the change is at most $M$ in
    both cases.

    As discussed,
	$X_{n'} = \sigma_{S}(q)$ and $X_0 = \E(\sigma_{S}(q))$.
	Note that $X_0$ is not a random variable and that we have 
    $\sum_{i=1}^{n'} c_i \le \sum_{i=1}^{n'-1} \omega_{q;P}(p_{i+1}) = O(\sigma_P(q))$.
	By Azuma's inequality we have
    \begin{multline*}
      \Pr[ |X_{n'} - X_0 | \ge \varepsilon \sigma_P(q) ] \le
      e^{-\frac{\varepsilon^2 \sigma^2_P(q)}{\sum c_i^2}} \le
      e^{-\frac{\varepsilon^2 \sigma^2_P(q)}{M\sum c_i}} \le  
      e^{-\Omega\left(\frac{\varepsilon^2 \sigma_P(q)}{M} \right)}\\ =
      e^{-\Omega\left(\frac{\varepsilon^2 \sigma_P(q) \log n}{C^{-1} \varepsilon^2
          \sigma_{P}(q)} \right)}.
    \end{multline*}
	The lemma follows by picking $C$ large enough.
\end{proof}

The above lemma can be used to reduce the problem of computing $\sigma_P(q)$ to
computing the simplicial depth of $q$ with the respect to a set $S$ of roughly half
the size of $P$. Furthermore, the value of $\E(\sigma_S(q))$ is directly tied to the value of
$\sigma_P(q)$: any simplex $\Delta \in \Delta_P(q)$ that contains $t$ points from
the set $P'$ and $d+1 - t$ points from $P_\lar$ contributes exactly $2^{-t}$ to
$\E(\sigma_S(q))$.

\ignore{
\paragraph{Remarks.}
There are many articles that deal with concentration of 
functions of independent variables and the results obtained in
many of them can possibly be applied to our problem as well. 
For instance, we can formulate our problem easily in Van Vu's polynomial
model~\cite{VanVu} although the final bound will be slightly worse than what we have
obtained above. 
We can also use Talagrand's inequality~\cite{Talagrand} and obtain an equivalent result
to that of Lemma~\ref{lem:RS}. 
These imply the result of Lemma~\ref{lem:RS} might not be optimal in terms
of dependence on $\gamma$ although the situation depicted in 
Figure~\ref{fig:problem} implies we cannot hope to get a better bound than
$e^{-\Theta(\gamma r)}$.
}

\section{Approximating the Simplicial Depth in \texorpdfstring{$\mathbf{2}$}{2}
  Dimensions}
\label{sec:ldim}
The main result of this section is a data structure of near-linear size that can 
answer approximate simplicial depth queries in polylogarithmic time.
Later, it will be used to get an almost-optimal algorithm for approximating the
simplicial depth in 3D.
The problem can be stated as a triangle counting problem:
given a set of $n$ points ${P}$ in the plane, build a data structure 
capable of approximating
the number of triangles formed by points of ${P}$ that contain a query point $q$.

Let $S$ be a $\frac{1}{2}$-sample from $P$.
Consider the simple and easy case when $P_\lar$ is empty. Then, we have
$\E(\sigma_S(q)) = \sigma_P(q)/8$ by linearity of expectation and the observation
that any triangle made by points of $P$ containing $q$
survives with probability $1/8$. By Lemma~\ref{lem:RS}, we can conclude that
a recursively computed approximation for $\sigma_S(q)$ is with high probability
very close to $\sigma_P(q)/8$.
If $P_\lar$ is not empty, then it can only contain polylogarithmically many points
(as each point contributes a significant amount to the simplicial depth) and thus
we need to keep track of a ``few'' points.
The biggest challenge, however, is finding the subset $P_\lar$.
This is done with the following lemma.
    Unfortunately its proof does not seem to be easy and in fact it requires overcoming
    many technical steps and combining shallow cuttings with various observations
    regarding the geometry of planar points. We ultimately reduce the problem
    to instances of orthogonal range reporting problem in eight(!) dimensional space, which fortunately can
    be solved with $\tilde{O}(n)$ space and $\tilde O(1)$ query time.
    The proof is given in Section~\ref{sec:findlarge}.

\begin{restatable}{lemma}{rlemfindlarge}
    \label{lem:findlarge}
	Let  $P$ be set of $n$ points and let 
    $S'_0 = {P}, S'_1, \dots, S'_r$ in which
    $S'_{i+1}$ is a $\frac{1}{2}$-sample from $S'_i$.
    There exists a data structure of size $\tilde{O}(n)$, such that the following holds with high probability.
    Given $j$ and a query point $q$, define $M = C^{-1} \varepsilon^2 \sigma_{S'_j}(q) / \log n$, where
    $C$ is a constant and $\eps > 0$ is a fixed parameter.
    If $\tau_{S_j}(q) = \Omega(\varepsilon^{-2} \log^3 n)$, then
    the data structure can find the set $P_\lar \subset S_j$ that contains all
    the points $p$ with $\omega_{q;S_j}(p) \ge M$ in $\tilde{O}(1)$ time.
	The data structure can be built in $\tO(n)$ expected time. 

    Also, the data structure defines $O(n)$ canonical halfplanes, such that 
    $P_\lar$ lies inside a halfplane $h(q)$ that contains $q$ and $\tilde{O}(\tau_{S_j}(q))$ points of $S_j$.
    $h(q)$ only depends on $q$ and not $j$.
\end{restatable}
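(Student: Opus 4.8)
The overall strategy is to first reduce the problem of bounding $\omega_{q;S_j}(p)$ to a simpler geometric quantity, then exploit that a point $p$ with large weight forces strong combinatorial structure on how $p$ ``sees'' $q$ relative to the rest of $S_j$, and finally encode the search for such points as an orthogonal range reporting query. First I would fix the sample index $j$ and write $P_j := S_j$ throughout. The key local observation is the one already used in Lemma~\ref{lem:change}: a simplex $S' \cup \{p\}$ containing $q$, with $p$ a vertex, corresponds to a triangle $T \subset P_j$ not containing $q$ such that the ray $\ray{pq}$ (or its reverse) pierces a particular facet of $T$ determined by $p$. In the plane this means $\omega_{q;P_j}(p)$ counts pairs $\{a,b\} \subseteq P_j$ such that $q \in \conv\{p,a,b\}$, which is controlled by how many points lie in the two wedges cut out at $q$ by the line $\dual{pq}$; more precisely, if $\ell_1, \ell_2$ are the two open halfplanes bounded by $\dual{pq}$, then $\omega_{q;P_j}(p)$ is (up to lower-order terms) the product of the number of points of $P_j$ in a certain sub-wedge of $\ell_1$ and a certain sub-wedge of $\ell_2$. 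The upshot: if $\omega_{q;P_j}(p) \ge M$, then at least one of these two wedge counts is $\ge \sqrt{M} = \Omega(\varepsilon \sqrt{\sigma_{P_j}(q)/\log n})$, which by Lemma~\ref{lem:SH} (the bound $\sigma \le |P|^2\tau$ in the plane) is $\Omega(\varepsilon \sqrt{n\,\tau_{P_j}(q)/\log n})$ — comfortably more than $\tau_{P_j}(q)$ under the hypothesis $\tau_{P_j}(q) = \Omega(\varepsilon^{-2}\log^3 n)$.

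The next step is to turn ``$p$ defines a halfplane through $q$ containing many points'' into a statement about a small, query-independent family of \emph{canonical} halfplanes. Here I would invoke shallow cuttings for halfplanes: the $\le k$-level of the dual line arrangement admits a shallow cutting of size $O(n/k)$, and taking a logarithmic number of these for $k$ ranging over powers of two yields $O(n)$ canonical halfplanes (one per cutting cell, say the halfplane supporting the cell) with the property that for every query point $q$ there is a canonical halfplane $h(q)$ containing $q$ and only $\tilde O(\tau_{P_j}(q))$ points of $P_j$, and moreover $h(q)$ is essentially the ``tightest'' halfplane through $q$. Since we argued every large-weight $p$ induces a halfplane through $q$ with $\gg \tau_{P_j}(q)$ points on one side — and such a halfplane, intersected appropriately, still contains $q$ with few points on the ``$q$ side'' — one shows $p$ must lie inside $h(q)$. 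This is what gives the final paragraph of the statement: $P_\lar \subseteq h(q)$, $h(q)$ depends only on $q$, and $|h(q) \cap P_j| = \tilde O(\tau_{P_j}(q))$. Note $h(q)$ is genuinely independent of $j$ because the canonical family can be built once for $P = S'_0$ and its subsamples share the same cutting structure up to the high-probability sampling guarantees; alternatively one builds the $O(n)$ halfplanes from $P$ and argues they serve all $S'_j$.

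With $P_\lar$ confined to the $\tilde O(\tau_{P_j}(q))$ points inside $h(q)$, the final step is the actual reporting. Having located $h(q)$ among the $O(n)$ canonical halfplanes in $\tilde O(1)$ time, I would iterate over the $\tilde O(\tau_{P_j}(q)) = \tilde O(\varepsilon^{-2})$ candidate points $p$ (this is where the lower bound hypothesis on $\tau$ is spent — it keeps the candidate list polylogarithmic) and for each decide whether $\omega_{q;P_j}(p) \ge M$. By the wedge characterization above, $\omega_{q;P_j}(p)$ is a sum of a constant number of products of wedge-counts, each wedge determined by $q$, $p$, and one further point of $P_j$; summing over that further point, the whole quantity becomes expressible through a constant number of queries of the form ``count points of $P_j$ in the intersection of $O(1)$ halfplanes whose defining lines pass through $\{q,p\}$ or through two points of $P_j$''. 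Lifting the points of $P_j$ via the standard linearization that turns ``two data points and the query span a line separating the rest in a prescribed way'' into an orthogonal dominance condition, one lands in a fixed-dimensional — the authors say eight-dimensional — orthogonal range counting/reporting instance, solvable with $\tilde O(n)$ space and $\tilde O(1)$ query time by a range tree. Doing this for every candidate $p$ and keeping those with count $\ge M$ yields exactly $P_\lar$, in $\tilde O(1)$ total time. Preprocessing is dominated by building the shallow cuttings ($\tilde O(n)$ expected, via the standard randomized construction) and the $O(1)$-dimensional range trees ($\tilde O(n)$), giving the claimed $\tilde O(n)$ expected build time.

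\textbf{Main obstacle.} The genuinely hard part is not any single ingredient but the exact algebraic reduction in the last step: pinning down the finitely many halfplane-intersection queries that reconstruct $\omega_{q;P_j}(p)$ (including getting the casework of which facet of each triangle the ray pierces right, and handling the $O(1)$ sub-wedge decompositions at $q$), and then finding a linearization of bounded dimension that simultaneously encodes all of them so that a \emph{single} range-tree structure answers every query. Controlling that the dimension stays constant — the paper's ``eight'' — while the query lines depend on both the fixed query-dependent point $q$ and a ranging data point $p$ is the delicate bookkeeping that the full proof in Section~\ref{sec:findlarge} must carry out.
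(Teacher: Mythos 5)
Your first two paragraphs are in line with the paper: you correctly observe that $\omega_{q;S_j}(p)$ is controlled by the two point counts on either side of the line $\overline{pq}$ (the paper's Lemma~\ref{lem:ob1} shows $n_1n_2$ is a good approximation), and you correctly propose shallow cuttings on the dual of $P$ to manufacture a family of $O(n)$ canonical halfplanes from which $h(q)$ is selected. The containment $P_\lar \subseteq h(q)\cap S_j$ is established exactly as you describe, using $\sigma_{S_j}(q) = \Omega(|S_j|\tau_{S_j}^2(q))$ from Lemma~\ref{lem:SH}.

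The final step, however, has a genuine gap, and it traces to a misreading of the hypothesis. The condition $\tau_{S_j}(q) = \Omega(\varepsilon^{-2}\log^3 n)$ is a \emph{lower} bound on the Tukey depth, not an upper bound, so $\tau_{S_j}(q)$ can be as large as $\Theta(|S_j|)$. Consequently the candidate set $h(q)\cap S_j$ contains $\tilde O(\tau_{S_j}(q))$ points but this is \emph{not} $\tilde O(\varepsilon^{-2})$; it can be $\tilde O(n)$. Your plan to ``iterate over the $\tilde O(\tau_{P_j}(q))=\tilde O(\varepsilon^{-2})$ candidate points'' conflates the size of the output $P_\lar$ (which is indeed $O(\varepsilon^{-2}\log n)$, since each of its elements contributes at least $M$ to $3\sigma_{S_j}(q)$) with the size of the candidate pool. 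Iterating over the pool and testing each point against the threshold would cost $\tilde O(\tau_{S_j}(q))$ per query, which is not $\tilde O(1)$. Moreover, if the lower bound on $\tau$ did translate into a small pool, the lemma would not be needed at all: the algorithm of Section~\ref{sec:ldim} already handles the case $\tau_{S_i}(q) = \tilde O(1)$ by brute force through Lemma~\ref{lem:calc}, and Lemma~\ref{lem:findlarge} is invoked precisely for the complementary regime where $\tau$ is large.

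What the paper does to avoid this iteration is the content you label ``delicate bookkeeping'' and sketch only at a high level, but it is the actual crux: one must report the high-weight points \emph{without} touching the rest of the pool. The paper builds, for each canonical subgroup $G\subseteq v^\downarrow$, a second hierarchy of shallow cuttings $L'_i, U'_i$ on the dual lines of $S_j$. A point $p$ has dissection value $\Theta(k_i)$ iff the dual lines $\dual{p}$ and $\dual{q}$ cross between two specific consecutive chains, which is detected from the relative positions of the intervals $I_{i-1}(\dual{p}), I_i(\dual{p})$ and $I_{j-1}(\dual{p}), I_j(\dual{p})$ against the query's intervals. Encoding the four interval endpoints of $p$ as an eight-dimensional point turns ``find all $p\in G$ with dissection value above threshold'' into $O(\log^2 n)$ dominance-reporting queries, each answered in $\tilde O(1)$ time with output-sensitive cost. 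Since the output $P_\lar$ (plus the constant-factor slack of extra reports) is $\tilde O(\varepsilon^{-2})=\tilde O(1)$, the total query time is $\tilde O(1)$ regardless of how large $\tau_{S_j}(q)$ is. Without this dissection-reporting machinery, or something equivalent, the $\tilde O(1)$ query bound cannot be achieved by your proposed route.
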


We also need the following lemma.
For the proof, see Section~\ref{sec:calc}.
\begin{restatable}{lemma}{lemcalc}\label{lem:calc}
    Given a parameter $\varepsilon > 0$, 
    we can store a set $S$ of $n$ points in a data structure of size $\tilde{O}(n)$ 
    that can answer the following queries. 
	Given a query point $q$, a halfplane $h$ 
	with $q$ on its boundary containing
    $\tilde{O}(\tau_S(q))$ points of $S$, and a subset of $R\subset S \cap h$, we
    approximate the total number of triangles that contain $q$ and include at least one point 
    from $R$ in $\tilde{O}(\varepsilon^{-2} |R|^2)$ time and with additive error of 
	at most $\varepsilon \sigma_S(q)$.
\end{restatable}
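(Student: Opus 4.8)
The plan is to reduce the count we must estimate to $O(|R|^2)$ ``elementary'' approximate counting queries, each answerable by a near-linear-size data structure, and to arrange the errors so that they add up to an additive error of at most $\varepsilon\sigma_S(q)$. Write $N$ for the number of triangles that contain $q$ and use at least one vertex from $R$. Since $q$ lies on the line bounding $h$ and $R\subseteq S\cap h$, general position forces any triangle containing $q$ with a vertex in $R$ to have at most two vertices on the $h$-side, hence at most two vertices in $R$ (three points on the closed $h$-side cannot have the boundary point $q$ in their convex hull unless $q$ is collinear with two of them). Grouping the triangles counted by $N$ according to whether they have one or two vertices in $R$ and using inclusion--exclusion gives
\begin{align*}
N \;=\; \sum_{p\in R}\omega_{q;S}(p)\;-\;\sum_{\{p,p'\}\subseteq R}\beta_S(p,p'),
\end{align*}
where $\beta_S(p,p')$ is the number of $b\in S$ with $q\in\conv\{p,p',b\}$; this count automatically ignores $b\in h$, as otherwise all three points lie on the $h$-side. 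Since $\sum_{p\in R}\omega_{q;S}(p)\le\sum_{p\in S}\omega_{q;S}(p)=3\sigma_S(q)$ and $\sum_{\{p,p'\}\subseteq R}\beta_S(p,p')\le\sigma_S(q)$, it suffices to estimate every summand up to \emph{relative} error $O(\varepsilon)$, rescaling $\varepsilon$ by a constant at the end.

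For the $\beta$-terms, a routine computation shows that $\beta_S(p,p')$ equals the number of points of $S$ inside the convex wedge with apex $q$ spanned by the directions $q-p$ and $q-p'$ (of opening angle $<\pi$ because $p,p'\in h$). I would preprocess $S$ into a structure of size $\tO(n)$ that, for such a query wedge, returns an exact count when it is at most $\tO(\varepsilon^{-2})$---via a shallow-cutting-based reporting structure---and a $(1\pm\varepsilon)$-approximation otherwise, in $\tO(\varepsilon^{-2})$ time; this is standard, using relative $\varepsilon$-approximations and partition trees for constant-VC-dimension ranges~\cite{Afshani.Chan.SOCG07,KaplanSh2006,Aronov.et.al.SOCG07,AronovHa2008}. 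Running it on the $\binom{|R|}{2}$ pairs and summing yields $\sum_{\{p,p'\}}\beta_S(p,p')$ with additive error $O(\varepsilon\sigma_S(q))$ in $\tO(\varepsilon^{-2}|R|^2)$ time.

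The $\omega$-terms are the crux. We must approximate $\omega_{q;S}(p)=\sigma_S(q)-\sigma_{S\setminus\{p\}}(q)$, the number of $q$-containing triangles through $p$, up to relative error $\varepsilon$ for each of the $|R|$ points $p\in R$, in $\tO(\varepsilon^{-2})$ time each. The two facts I would use are: (i) $\omega_{q;S}(p)$ depends only on $q$ and on the direction of $p$ from $q$---it is exactly the number of segments spanned by pairs of $S$ that are crossed by the ray emanating from $q$ in direction $q-p$; and (ii) after an angular sort of $S$ around $q$, writing $c_i$ for the number of points of $S$ in the open half-disc of directions counterclockwise-ahead of $p_i$ and using $\binom{c}{2}-\binom{c-1}{2}=c-1$, one can express $\omega_{q;S}(p)$ through a single half-plane count together with a partial sum $\sum_i c_i$ taken over the points $p_i$ lying in an angular half-disc determined by the direction of $p$. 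The obstacle is to build an $\tO(n)$-size structure that approximates this ``ray-stabbing'' count within relative error $\varepsilon$ in $\tO(\varepsilon^{-2})$ time (and reports the stabbed segments exactly once the count drops below $\tO(\varepsilon^{-2})$); I expect this to require shallow cuttings together with the same relative-approximation machinery, and it is where most of the work of the corresponding section goes. Since $\sum_{p\in R}\omega_{q;S}(p)\le 3\sigma_S(q)$, these relative estimates sum to $\sum_{p\in R}\omega_{q;S}(p)$ with additive error $O(\varepsilon\sigma_S(q))$, at a total cost of $\tO(\varepsilon^{-2}|R|)$ time.

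Subtracting the two estimates then gives $N$ up to additive error $O(\varepsilon\sigma_S(q))$, which becomes $\varepsilon\sigma_S(q)$ after rescaling $\varepsilon$ by a constant. The total space is $\tO(n)$, and the running time is $\tO(\varepsilon^{-2})$ per elementary query times $|R|+\binom{|R|}{2}=O(|R|^2)$ queries, i.e.\ $\tO(\varepsilon^{-2}|R|^2)$ as claimed. The main difficulty is thus entirely concentrated in the ray-stabbing data structure of step (ii); everything else is the combinatorial reduction above plus off-the-shelf approximate range counting.
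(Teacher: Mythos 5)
Your inclusion--exclusion identity
\begin{align*}
N \;=\; \sum_{p\in R}\omega_{q;S}(p)\;-\;\sum_{\{p,p'\}\subseteq R}\beta_S(p,p')
\end{align*}
is correct, and the wedge characterisation of $\beta_S(p,p')$ is also correct. But this is where the proposal and the paper diverge, and where a genuine gap opens up. You reduce the lemma to two subproblems: (a) a $\tilde{O}(n)$-space structure that returns a relative $(1\pm\varepsilon)$-approximation of $\omega_{q;S}(p)$ in $\tilde{O}(\varepsilon^{-2})$ time, and (b) a $\tilde{O}(n)$-space structure that returns a relative $(1\pm\varepsilon)$-approximation of the number of points of $S$ in an arbitrary wedge with apex $q$. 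Neither is an off-the-shelf result. For (a) you explicitly defer the construction (``I expect this to require shallow cuttings \dots and it is where most of the work of the corresponding section goes''), which is precisely the content one would need to prove. For (b), the machinery you cite (Afshani--Chan, Kaplan--Sharir, Aronov et al.) gives relative approximation for \emph{halfspace} counting via shallow cuttings; a wedge is the intersection of two halfplanes through $q$, and there is no known near-linear-space, polylogarithmic-time relative approximation for such ranges (exact wedge counting in 2D costs $\tilde{O}(\sqrt{n})$ per query with $O(n)$ space). Moreover, (a) is likely false as stated: when the dissection value of $p$ is small, $\omega_{q;S}(p)$ is not well approximated by a product of two halfplane counts, and there is no obvious fallback to exact reporting since $\omega_{q;S}(p)$ can still be large (up to $n_1|S|$).

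The crucial observation you are not using is the hypothesis that $h$ contains only $\tilde{O}(\tau_S(q))$ points of $S$. The paper leverages this through Lemma~\ref{lem:ob1} and a three-way case split on the dissection value of $p$ versus $\tau_S(q)$ and $|R|$. If the dissection value $n_1$ is large (at least $2\varepsilon^{-1}(m+|R|)$, where $m=|S\cap h|$), then $n_1 n_2$ is already a relative $(1+O(\varepsilon))$-approximation of $\omega_{q;S}(p)$ \emph{and} the triangles through $p$ with a second vertex in $R$ are provably negligible, so the $\beta$-terms never need to be estimated in this regime --- approximate halfplane counting suffices. If $\tau_S(q)=\tilde{O}(\varepsilon^{-1}|R|)$, then $m$ and all dissection values are $\tilde{O}(\varepsilon^{-1}|R|)$, so the paper reports these points exactly and computes the wedge/triangle counts exactly, in $\tilde{O}(\varepsilon^{-2}|R|^2)$ total time. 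In the remaining case ($\tau_S(q)$ large but dissection value small), the two-from-$R$ triangles are again negligible by Lemma~\ref{lem:SH}, reducing to halfplane reporting. In short, the paper never approximates $\omega_{q;S}(p)$ to relative error uniformly, and never needs approximate wedge counting at all; the hard subproblems your reduction creates are exactly the ones the paper's case analysis is designed to avoid. To repair your argument you would need to supply the $\omega$- and $\beta$-structures, which in turn seems to require the same case analysis on $\tau_S(q)$ and the dissection value that the paper uses --- so the inclusion--exclusion reformulation does not save work here.
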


In the rest of this section, we outline our solution to approximate the simplicial depth of a query point.
Consider a series of random samples 
${S'}_0 = {P}, {S'}_1, \dots, {S'}_r$ in which $S'_{i+1}$ is a $1/2$-sample from $S'_i$
and $|S_r| = \tilde{O}(1)$.
We store each sample $S'_i$ in the data structure from Lemma~\ref{lem:calc}.
Furthermore, we store the sampling sequence
sampling sequence $S'_0, \cdots, S'_r$ in the data structure from Lemma~\ref{lem:findlarge}.

We use a recursive approach where Lemmata~\ref{lem:RS},~\ref{lem:findlarge},
and \ref{lem:calc} are our bread and butter: during step $i$ of the query
algorithm, 
we are given a set $S\ti_\lar \subset R(q)\cap P$ containing $\tilde{O}(1)$ points, such that
$S\ti_\lar \cap S'_i = \emptyset$.
The goal is to compute $\sigma_{S_i}(q)$ where $S_i = S'_i \cup S_\lar$.
Initially, $S^{(0)}_\lar = \emptyset$ and $S'_0 = P$.

Our strategy will be to recursively compute the simplicial depth.
We will assume our recursion returns a relative $(1+\delta)$-approximation of
$\sigma_{S_{i+1}}(q)$. 
By using Lemmata~\ref{lem:RS},~\ref{lem:findlarge}, and \ref{lem:calc} with parameter $\varepsilon$, 
we can return a value that is a relative $(1+\delta + O(\varepsilon))$-approximation of 
$\sigma_{S_i}(q)$. 
We set $\varepsilon = 1/\delta$ such that at the top of the recursion we end up
with a relative $(1+O(\delta))$-approximation. Now we present the details.

\begin{figure}[h]
    \centering
    \includegraphics[scale=0.5]{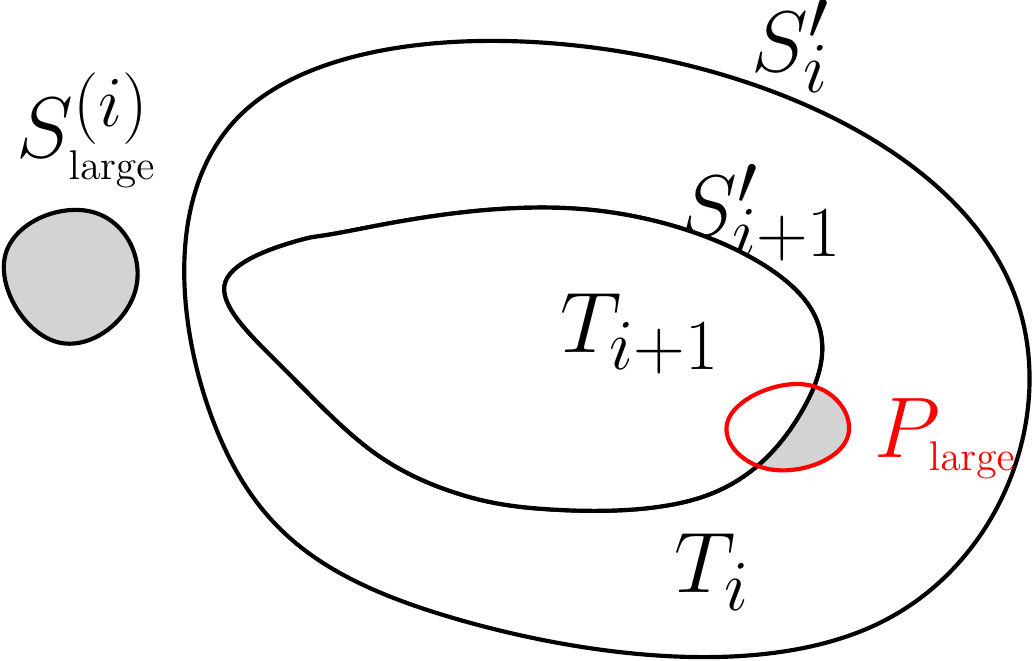}
    \caption{$S'_{i+1}$ is a $1/2$-sample of $S'_i$. 
    $P_\lar \subset S'_i$ is obtained through Lemma~\ref{lem:findlarge} and it is shown in red.
    $S_i = S\ti_\lar \cup S'_i$, $T_{i+1} = S'_{i+1}\setminus P_\lar$, $T_{i} = S'_{i}\setminus P_\lar$, and
     the greyed areas represent $S\tii_\lar$.}
    \label{fig:venn}
\end{figure}
\paragraph*{Approximating $\sigma_{S_i}(q)$.}
Let $\varepsilon = \delta/\log n$.
First, observe that if $\tau_{S_i}(q) = \tilde{O}(1)$, then we can directly approximate 
$\sigma_{S_j}(q)$ using Lemma~\ref{lem:calc} by setting $R = R(q) \cap S_j$ in
Lemma~\ref{lem:calc}.
In the rest of this proof, we assume that $\tau_{S_i}(q) = \Omega(\varepsilon^{-2} \log^3 n)$.

See Figure~\ref{fig:venn} for a Venn diagram  of the various subsets involved
here.  We set $M=C^{-1} \varepsilon^2 \sigma_{S'_i}(q) / \log n$ and using
Lemma~\ref{lem:findlarge}, we find the set $P_\lar \subset S'_i$. Let
$T_{i+1} = S'_{i+1}\setminus P_\lar$, $T_{i} = S'_{i}\setminus P_\lar$, and
$S\tii_\lar = (P_\lar \setminus S'_{i+1}) \cup S\ti_\lar$.  Clearly,
$S\tii_\lar \cap S_{i+1} = \emptyset$, and $S\tii_\lar \subset R(q) \cap P$ so
we can recurse. Assume we obtain a value $Y$ that is a relative $(1+\delta)$ approximation
of $\sigma_{S_{i+1}}(q)$.  Note that $T_{i+1}$ is a $1/2$-sample of $T_i$.
From the definition of $P_\lar$ and by Lemma~\ref{lem:RS}, we can
conclude that with high probability
\begin{align}
    |\E[\sigma_{T_{i+1} \cup P_\lar}(q)] - \sigma_{T_{i+1} \cup P_\lar}(q)| \le \varepsilon \sigma_{S'_{i}}(q) \label{eq:high}.
\end{align}
By Lemma~\ref{lem:calc}, we can approximate the number of triangles, $m$, that
contain at least one point from $S\ti_\lar$ with additive error
$\varepsilon \tau_{S_{i+1}}(q)$. This combined with $Y$ gives an approximate
value of $\sigma_{T_{i+1} \cup P_\lar}(q)$ with additive error
$(O(\varepsilon) + \delta) \tau_{S_{i+1}}(q)$.  Combined with (\ref{eq:high}),
this gives an estimate for $\E[\sigma_{T_{i+1} \cup P_\lar}(q)]$ with additive
error $(O(\varepsilon) + \delta) \tau_{S_{i}}(q)$.  Let $n_i$, $0 \le i \le
3$, be the number of triangles containing $q$ that have $i$ points from the set
$P_\lar$ and $3-i$ points from $T_i$.  Clearly, $\sigma_{S'_i}(q) = n_0 + n_1 +
n_2 + n_3$ and $\E[\sigma_{T_{i+1} \cup P_\lar}(q)] = n_0/8 + n_1/4 + n_2/2 +
n_3$.  Using Lemma~\ref{lem:calc}, we approximate $n_1 + n_2 + n_3$ with
additive error $\varepsilon \tau_{S_{i}}(q)$. 
However, $n_2$ and $n_3$ are negligible compared to our error margins:
by Lemma~\ref{lem:SH}, $\sigma_{S_i}(q) = \Omega(|S_i|\tau^2_{S_i}(q))$
but $n_2$ can be at most $|P_\lar|^2 |S_i|$ and $n_3$ can be at most $|P_\lar|^3$.
Since $\tau_{S_i}(q) = \Omega(\varepsilon^{-2} \log^3 n)$, it follows that
we can ignore $n_2$ and $n_3$ in our calculations and that Lemma~\ref{lem:calc} 
reveals an approximation of $n_1$.
With these observations, we can 
obtain an approximation $\sigma_{S'_i}(q)$  with additive error
$(O(\varepsilon) + \delta) \tau_{S_{i}}(q)$.  Combining this with an
approximation of $m$ yields an approximation of $\sigma_{S_i}(q)$ with additive
error $(\delta + O(\varepsilon))\sigma_{S_i}(q)$.
Observe that the error factor in our additive term has worsened from $\delta$ 
(regarding $Y$ and $\sigma_{S_{i+1}}(q)$) to $\delta + O(\varepsilon)$
(regarding $\sigma_{S_{i}}(q)$).
However, we have $\varepsilon = \delta/\log n$ and there are at most $O(\log n)$ recursion steps.
Thus at the top of the recursion (that is for $\sigma_P(q)$), we obtain a
$(1+O(\delta))$ approximation factor, as claimed.

\begin{theorem}\label{thm:2d}
	It is possible to preprocess a point set ${P}\subset \R^2$ of
	$n$ points in $\tO(n)$ expected time
	using $\tO(n)$ space such that, given a query point $q$,
	a relative $\varepsilon$-approximation for the simplicial depth of $q$ 
	can be found in 
	$\tO(1)$ expected time for any arbitrary fixed constant $\varepsilon > 0$
	with high probability.
\end{theorem}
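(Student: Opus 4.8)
The plan is to assemble Theorem~\ref{thm:2d} directly from the three technical lemmas (Lemma~\ref{lem:RS}, Lemma~\ref{lem:findlarge}, and Lemma~\ref{lem:calc}) via the recursive scheme already sketched above. First I would set up the sampling sequence $S'_0 = P, S'_1, \dots, S'_r$ with $S'_{i+1}$ a $1/2$-sample of $S'_i$ and $|S'_r| = \tO(1)$, stopping the recursion at $r = O(\log n)$. During preprocessing I build the Lemma~\ref{lem:findlarge} data structure on the whole sequence (in $\tO(n)$ expected time, $\tO(n)$ space), and the Lemma~\ref{lem:calc} data structure on each $S'_i$ separately; since the sizes of the $S'_i$ form a geometrically decreasing sequence, the total space and build time remain $\tO(n)$. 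The query algorithm recurses from $i = r$ down to $i = 0$, maintaining at each level the ``leftover'' set $S^{(i)}_\lar$ of $\tO(1)$ large-weight points; the per-level work is $\tO(1)$ for the findlarge query plus $\tO(\eps^{-2}|R|^2) = \tO(1)$ for the Lemma~\ref{lem:calc} calls, so the total query time is $O(\log n)\cdot\tO(1) = \tO(1)$.

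The second step is the error analysis, which is exactly the ``Approximating $\sigma_{S_i}(q)$'' paragraph: with $\eps = \delta/\log n$, each recursion level degrades the relative error from $\delta$ to $\delta + O(\eps)$, and summing over $O(\log n)$ levels gives a final $(1+O(\delta))$-approximation for $\sigma_P(q)$. To turn a target relative error $\eps_0$ into this scheme I would pick $\delta = \Theta(\eps_0)$ and $\eps = \delta/\log n$ accordingly; since $\eps_0$ is a fixed constant, $\eps^{-2} = \tO(1)$ and all the polylog bounds go through. I also need to verify the base case and the case $\tau_{S_i}(q) = \tO(1)$: when the Tukey depth is small, Lemma~\ref{lem:calc} applied with $R = R(q)\cap S_i$ (using the canonical halfplane $h(q)$ from Lemma~\ref{lem:findlarge}, which contains $\tO(\tau_{S_i}(q))$ points) directly yields the approximation without further recursion. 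When $q$ is outside $\conv(P)$, $\sigma_P(q)=0$ and this is detected trivially.

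The third step is the probability bookkeeping. The findlarge data structure succeeds ``with high probability'' (over the sampling), and for each of the $O(\log n)$ levels the Azuma bound of Lemma~\ref{lem:RS} fails with probability at most $n^{-c}$; a union bound over the $O(\log n)$ levels and the $O(n)$ canonical halfplanes still leaves failure probability $n^{-c'}$ for a suitable constant, so the whole query succeeds with high probability. One subtlety is that the randomness in the $S'_i$ is fixed at preprocessing time, so I must phrase the guarantee as: with high probability over the construction, the data structure answers every query correctly — which is what Lemma~\ref{lem:findlarge} already provides. The main obstacle in this argument is not in the assembly itself (which is routine once the lemmas are in hand) but in making sure the additive errors are measured against the right quantity at each level: Lemma~\ref{lem:calc} gives additive error $\eps\sigma_{S}(q)$, and one must repeatedly invoke Lemma~\ref{lem:SH} ($\sigma_{S_i}(q) = \Omega(|S_i|\tau_{S_i}^2(q))$ in the plane) to argue that the ``contaminating'' triangle counts $n_2, n_3$ with two or three points in $P_\lar$ are negligible relative to $\eps\,\sigma_{S_i}(q)$ under the standing assumption $\tau_{S_i}(q) = \Omega(\eps^{-2}\log^3 n)$ — and crucially that this assumption, once it holds at the top, propagates down the recursion because Tukey depth only decreases mildly under sub-sampling. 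Tracking that these invariants survive all $O(\log n)$ levels is the delicate part; everything else is linearity of expectation and a union bound.
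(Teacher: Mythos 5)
Your proposal follows the paper's own argument essentially verbatim: the same sampling ladder $S'_0,\dots,S'_r$, the same division of work between Lemma~\ref{lem:findlarge} (locating $P_\lar$) and Lemma~\ref{lem:calc} (counting triangles that touch a small known set), the same per-level error of $O(\varepsilon) = O(\delta/\log n)$ accumulating to $O(\delta)$ over $O(\log n)$ levels, and the same use of Lemma~\ref{lem:SH} to dismiss $n_2,n_3$. The extra care you take in the probability bookkeeping (randomness being fixed at preprocessing time) and the base case are implicit in the paper; you have not changed the route, only made a couple of those steps more explicit.
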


\section{Proof of Lemma~\ref{lem:findlarge}}~\label{sec:findlarge}

\begin{restatable}{lemma}{lemob}\label{lem:ob1}
    Let ${P} \subset \R^2$ be a set of $n$ points, $q \in \R^2$ an 
    arbitrary point and $h$
    a halfplane containing at most $C'\tau_{{P}}(q)$ points
    with $q$ at its boundary.
    Consider the line $\overline{pq}$ for a point $p \in h$ and assume
    it partitions ${P}$ into two sets of sizes $n_1$ and $n_2$.
    If $n_1, n_2 \ge 2C'\varepsilon^{-1}\tau_{{P}}(q)$ then 
	$n_1 n_2$ is a relative (1+$\varepsilon)$-approximation of $\omega_{q;P}(p)$.
\end{restatable}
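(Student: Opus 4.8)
The plan is to sandwich $\omega_{q;P}(p)$ between $(1-\varepsilon)\,n_1 n_2$ and $n_1 n_2$. The upper bound is the easy direction: if a triangle $\{p,a,b\}$ with $a,b\in P$ contains $q$, then $a$ and $b$ must lie on opposite open sides of the line $\overline{pq}$, for otherwise $\conv(p,a,b)$ sits inside one of the closed halfplanes bounded by $\overline{pq}$ and meets $\overline{pq}$ only in $\{p\}$ (general position forbids a point of $P$ on $\overline{pq}$), contradicting $q\ne p$. Hence $\{p,a,b\}\mapsto\{a,b\}$ injects $\Delta_{p;P}(q)$ into the set of $n_1 n_2$ pairs straddling $\overline{pq}$, so $\omega_{q;P}(p)\le n_1 n_2$.

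For the lower bound, let $g=\partial h$ and let $\bar h$ denote the open halfplane complementary to $h$. I first observe that $g\neq\overline{pq}$: otherwise $h$ would be a closed halfplane bounded by $\overline{pq}$, hence would contain at least $\min(n_1,n_2)\ge 2C'\varepsilon^{-1}\tau_{P}(q)>C'\tau_{P}(q)\ge|h\cap P|$ points, a contradiction. Consequently $g$ meets $\overline{pq}$ transversally at $q$ only, and since $p\in h$ and $p\notin g$ (else $g=\overline{pq}$) the point $p$ lies in the interior of $h$; therefore $\overline{pq}\cap h$ is exactly the closed ray $\overrightarrow{qp}$, and $\overline{pq}\cap\bar h$ is the opposite open ray (the one pointing away from $p$).

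Now let $A$ (resp. $B$) be the set of points of $P$ lying strictly above (resp. strictly below) $\overline{pq}$ and \emph{outside} $h$. The crux of the proof is the claim that $q\in\conv(p,a,b)$ for every $a\in A$ and $b\in B$: the segment $[a,b]$ crosses $\overline{pq}$ in a unique point $c$, and since $a,b\in\bar h$ and $\bar h$ is convex we have $c\in\bar h\cap\overline{pq}$, i.e.\ $c$ lies on the open ray from $q$ pointing away from $p$; hence $q$ lies strictly between $p$ and $c$ on $\overline{pq}$, so $q\in[p,c]\subseteq\conv(p,a,b)$. Distinct pairs $\{a,b\}$ with $a\in A$, $b\in B$ yield distinct triangles, so $\omega_{q;P}(p)\ge|A|\,|B|$. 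Every point of $P$ above $\overline{pq}$ that is not in $A$ lies in $h\cap P$, so $|A|\ge n_1-|h\cap P|\ge n_1-C'\tau_{P}(q)\ge(1-\tfrac{\varepsilon}{2})n_1$ by hypothesis, and similarly $|B|\ge(1-\tfrac{\varepsilon}{2})n_2$. Thus $\omega_{q;P}(p)\ge(1-\tfrac{\varepsilon}{2})^2 n_1 n_2\ge(1-\varepsilon)n_1 n_2$, whence $n_1 n_2\le(1-\varepsilon)^{-1}\omega_{q;P}(p)$, which is the claimed relative $(1+\varepsilon)$-approximation after absorbing the $(1-\varepsilon)^{-1}=1+O(\varepsilon)$ factor into a rescaling of $\varepsilon$.

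I expect the only genuine obstacle to be getting the geometric claim right: the essential point is to realize that it is the \emph{complementary} halfplane $\bar h$ — which contains the entire segment $[a,b]$ — rather than $h$ itself that forces the crossing point $c$ to lie beyond $q$, and to phrase "$ab$ crosses $\overline{pq}$ past $q$'' as the clean ray statement above, avoiding coordinate computations. The upper bound, the injectivity argument, and the bookkeeping of constants in $\varepsilon$ are all routine.
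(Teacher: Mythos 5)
Your proof is correct and takes essentially the same approach as the paper's: both sandwich $\omega_{q;P}(p)$ between $(n_1-m)(n_2-m)$ and $n_1 n_2$, where $m\le C'\tau_{P}(q)$ is the number of points in $h$, and then conclude from $m/n_i\le\varepsilon/2$. Your write-up additionally spells out the geometric argument the paper leaves implicit — that the crossing of $[a,b]$ with $\overline{pq}$ lies beyond $q$ because $a,b$ lie in the complementary halfplane and $\overline{pq}$ exits $h$ precisely at $q$.
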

\begin{proof}
    The number of triangles containing $q$ and involving $p$ is at least
    $(n_1 - m)(n_2 -m)$ and at most $n_1 n_2$ where $m \le  C'\tau_{{P}}(q)$ 
	is the number of points in $h$.
    The lemma follows by a simple calculation and observing that $n_1 n_2$ is minimized
    when $n_1$ or $n_2$ is $2C'\varepsilon^{-1}\tau_{{P}}(q)$.
\end{proof}

We also use shallow cuttings which we state in two dimensions.
For a set of lines $H$ in the plane, the \emph{level} of a point $p$ is the number
of points that pass below $p$. 
Given an integer $k$, the $k$-level of $H$ is the closure of all the points that have
level exactly $k$.
The $(\le k)$-level is defined as the closure of all the points with level at most $k$.
\begin{theorem}\label{thm:shallow}
	Let $H$ be a set of $n$ lines in the plane and $k$ be a given parameter $1 \le k < n/2$.
	We can find a convex polygonal chain $C$ of size $O(n/k)$ such that it lies above
	the $k$-level of $H$, the level of every vertex of $C$ is $O(k)$.
	The cutting can be constructed in $O(n\log n)$ time. 
\end{theorem}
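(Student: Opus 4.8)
The statement is a planar shallow--cutting lemma, and the plan is to realize $C$ as (the upper convex hull of) a low level of a random sample, in the spirit of Clarkson--Shor and \Mat.

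First I would take a random sample $R\subseteq H$, including each line independently with probability $p=c_1/k$, so $\E|R|=c_1n/k$; by a Chernoff bound $|R|=O(n/k)$ with high probability, and otherwise one re-samples. Fix a sufficiently large constant $c_2$ and let $\Lambda$ be the $(\le c_2)$-level of $R$; it has combinatorial complexity $O(c_2|R|)=O(n/k)$ and is computable from $R$ in $O(|R|\log|R|)=O(n\log n)$ time. Let $C$ be the upper convex hull of the vertices on the boundary of $\Lambda$, with the two unbounded edges of $\Lambda$ as its unbounded rays; this is a convex polygonal chain of size $O(n/k)$, obtained in $O(n\log n)$ time. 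Since $C$ lies weakly above $\Lambda$ and every vertex of $C$ is a vertex of $\Lambda$, it suffices to prove that, with high probability, (i) $\Lambda$ lies above the $k$-level of $H$, and (ii) every vertex of $\Lambda$ has level $O(k)$ in $H$.

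Property (ii) is the standard Clarkson--Shor ``exponential decay'' estimate: a vertex of the $(\le c_2)$-level of $R$ is spanned by two lines of $R$ and has at most $c_2$ lines of $R$ below it, and the number of such low-complexity configurations is small enough that, simultaneously over all of them with high probability, at most $O((c_2+1)\,n/|R|)=O(k)$ lines of $H$ lie below such a vertex. For property (i) I would argue contrapositively: a point strictly below the $k$-level of $H$ has at most $k$ lines of $H$ below it, hence in expectation at most $c_1$ sampled lines below it, so taking $c_2$ a large multiple of $c_1$ and applying a Chernoff bound together with a union bound over the $O(|R|^2)$ vertices of $\mathcal{A}(R)$ (the only abscissae at which the $c_2$-th lowest sampled line changes) shows that with high probability no vertex of $\Lambda$ --- which by construction carries exactly $c_2$ sampled lines below it --- lies below the $k$-level. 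Combining, $C$ is convex, has size $O(n/k)$, lies above the $k$-level, and all of its vertices have level $O(k)$.

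The step I expect to be the real obstacle is removing the logarithmic slack from these union bounds: driving the per-configuration failure probability to $n^{-\Omega(1)}$ naively forces $|R|=\Theta\big((n/k)\log n\big)$, which would only yield a chain of size $O((n/k)\log n)$ and an extra log in the running time. Getting the clean $O(n/k)$ size and $O(n\log n)$ construction time stated here requires the plane-specific refinements of \Mat\ --- using that the $(\le ck)$-level of $n$ lines has complexity only $O(nk)$, so the sample's low level can be analyzed slab by slab with each slab crossed by only $O(k)$ lines of $H$ --- or a suitable derandomization. I would therefore either carry out this refinement explicitly or invoke \Mat's shallow-cutting construction in the plane, since all remaining details are routine.
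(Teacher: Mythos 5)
Your construction --- sample at rate $\Theta(1/k)$, take the $(\le c_2)$-level of the sample, and upper-convexify --- is morally correct, and you are right to single out the logarithmic slack in the naive union bound as the genuine obstruction: with $|R|=\Theta\bigl((n/k)\log n\bigr)$ you get a chain of size $O\bigl((n/k)\log n\bigr)$ rather than $O(n/k)$, and eliminating that slack is exactly what \Mat's refinement is for. The paper, for its part, does not re-derive anything: it invokes \Mat's result that the $(\le k)$-level can be covered by $O(n/k)$ triangles with $O(k)$ lines below each, Chan's observation that replacing that cover by its convex hull inflates the level by only an additive $O(k)$ (the same convexification step you carry out directly on a low level of a sample), and Ramos and Chan--Tsakalidis for the $O(n\log n)$ construction time. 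So your proposal and the paper rest on the same sources; the paper simply cites all three, whereas you re-derive the convexification step, correctly locate the part that is not ``routine,'' and defer to \Mat for it. The one piece you leave uncredited is the $O(n\log n)$ construction bound, which the paper attributes to Ramos (randomized) and Chan--Tsakalidis (deterministic) rather than treating it as a byproduct of an $O(|R|\log|R|)$-time level computation --- and indeed, once the log slack is removed, obtaining that time bound is itself a nontrivial result.
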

\begin{proof}
	Matou\v sek~\cite{Matousek.reporting.points} proved that one can cover the $(\le k)$-level
	with $O(n/k)$ triangles such that there are at most $O(k)$ lines passing below
	each triangle. 
	Chan~\cite{Chan.Range.Reporting.SJC00} observed that we can consider the convex hull of the triangles; 
	the number of lines passing below the convex hull is only increasing by $O(k)$.

	Ramos~\cite{Ramos.SOCG99} offered a randomized $O(n\log n)$ construction in 1999
	and recently Chan and Tsakalidis have shown the same running time can be
	achieved with a deterministic algorithm~\cite{chanshallow}.
\end{proof}

We will also be working with point-line duality in the plane.
This transformation, maps a line $\ell$ passing below (reps.\ above) a point $p$
to point $\dual{\ell}$ that lies below (resp.\ above) the line $\dual{p}$.

\rlemfindlarge*
\begin{proof}
    We will describe the data structures incrementally.
    Consider the set $\overline{{P}}$ of $n$ lines dual to $P$.
    Let $k_i = 2^i$, $1 \le i < \log n$. 
	First, by \Mat's shallow cutting theorem (Theorem~\ref{thm:shallow}), for each $k_i$, we build a shallow
    cutting $L_i$ for the $(\le k_i)$-level of $\dual{P}$, that is a convex
    polygonal chain of size $O(n/k_i)$ that lies between $k_i$-level and $O(k_i)$-level of
    $\overline{P}$. 
    Each vertex $v$ of the polygonal chain $L_i$ defines a canonical halfplane in primal space 
    (the region below $\dual{v}$).
    For each $S_j$, the subset of $S_j$ that lies inside a canonical halfplane is called a \emph{canonical set}.
    We also do the same for the $(\ge k_i)$-level (to obtain cuttings $U_i$).
	By Theorem~\ref{thm:shallow}, each chain $L_i$ has $O(n/k_i)$ vertices and thus creates $O(n/k_i)$ canonical sets.
	Furthermore, each canonical set contains $O(k_i)$ points.
    Thus, the total size of canonical sets on $P$ is $O(n\log n)$ which is also an upper bound for the total
	size of the canonical sets on each $S_i$. 
	This means, the total size of canonical sets,
    over all indices $j, 1 \le j \le \log n$ is $O(n\log^2 n)$.
    Furthermore, a standard application of the Chernoff bound yields that $L_i$ is below
	$(\le \Theta(k_i\log n/2^j))$-level and above $(\le (\Theta(k_i /(2^j \log
	n)))$-level of $S_j$, with high probability (similarly for $U_i$); let's
	call this the \emph{level property}.
    In the rest of this proof, we will build a separate data structure for each $S_j$.
    There are $O(\log n)$ different indices $j$ and thus this would only blow up the space by a $\log n$ factor.
    Let $S=S_j$ be the subset we are currently working with. 
    Given a query point $q$, the goal is to report a subset $P_\lar \subset S$ that contains all
    the points $p$ with $\omega_{q;S}(p) \ge M$.

    \begin{figure}[h]
        \centering
        \includegraphics[scale=1.0]{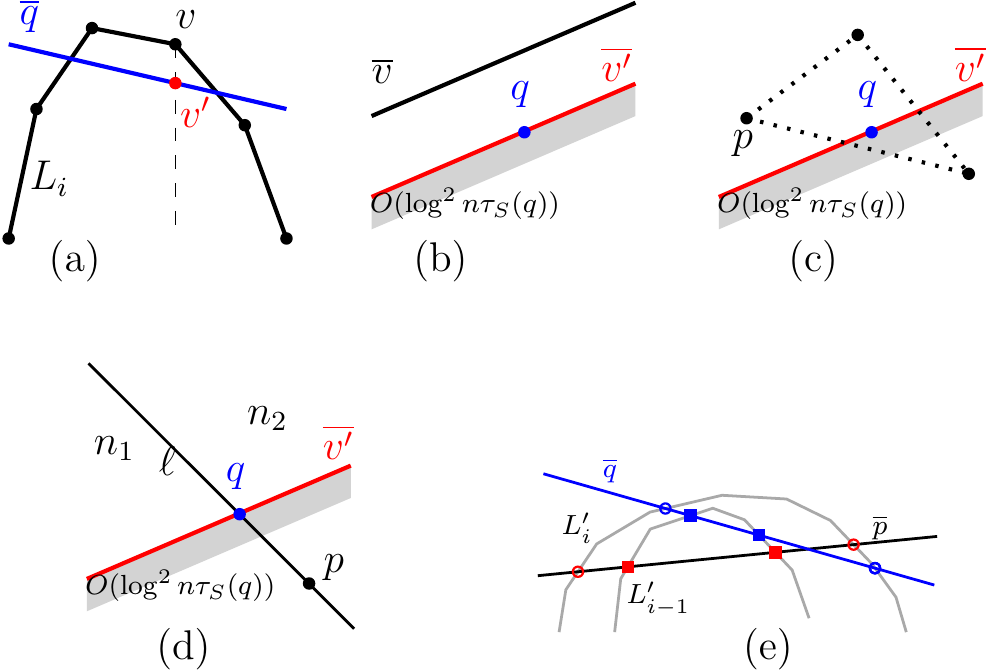}
        \caption{(a) In dual space, $\overline{q}$ passes below a vertex $v$ of a shallow cutting. 
		The set of lines below $v$ are considered a \emph{canonical set}.
		(b) In primal space, the region below $\dual{v}$ is a canonical halfplane. We
		can draw a line $\dual{v'}$ parallel to $v$ from $q$.
		(c) Any point that is above $\dual{v'}$ cannot create $M$ triangles that contain $q$ since it is forced
		to pick a point from below $\dual{v'}$ and there are only few such points.
		(d) $p$ lies below $\dual{v}$. The line $\ell$ is defined by connecting $p$ and $q$. 
	There are $n_1$ points below the line $\ell$ and $n_2$ points above it.
		(e) If a line intersects the convex chain $L'_i$, then it creates an interval on its boundary.
		We can tell if two lines intersects between chains $L'_{i-1}$ and $L'_i$ by examining the
	corresponding intervals that they define. }
        \label{fig:larged}
    \end{figure}

    The given query  point $q$, corresponds to a query line $\overline{q}$ in the dual space.
    If $i$ is the smallest index such that $\overline{q}$ intersects $L_i$ or $U_i$, then
    $k_i/2^j$ is an approximation of $\tau_{S}(q)$ up to $O(\log^2 n)$ factor, by the level property.
    W.l.o.g, assume $\overline{q}$ intersects $L_i$ and thus passes below a vertex $v \in L_i$
    (the other case when $\overline{q}$ intersects $U_i$ can be found in an analogous way and by
    building analogous data structures).
	Let $v'$ be the point on $\overline{q}$ the lies directly below $v$ (Figure~\ref{fig:larged}(a)).
    In the primal space, $v'$ corresponds to a line that goes through $q$ and has $O(\log^2 n\tau_{S}(q))$
    points below it, with high probability (by definition of $\tau_S(q)$, there are at least
	$\tau_S(q)$ points below $\dual{v'}$ as well).
    Furthermore, $v'$ is parallel to a canonical halfplane that is the dual of $v$ 
    (see Figure~\ref{fig:larged}(a,b)).
    $v'$ defines $h(q)$.

    We now claim, $P_\lar$ has to be a subset of points of $S$ that are below $\overline{v'}$.
    See Figure~\ref{fig:larged}(c).
    Consider a point $p$ that is not below $v'$; any triangle that contains $q$ and includes $p$,
    must include a point below $v'$.
    Remember that $v'$ has only $O(\log^2 n \tau_{S}(q))$ points below it. 
    This means, $\omega_{q;S}(p)=  O(|S| \log^2 n \tau_{S}(q))$.
     By Lemma~\ref{lem:SH}, we have $\sigma_S(q) = \Omega(|S| \tau^2_S(q))$.
    Note that we need to answer queries for when $\tau_S(q) = \Omega(\varepsilon^{-2} \log^3 n)$,
	namely, when $\tau_S(q) > C\varepsilon^{-2} \log^3 n$,
    so we have $\sigma_S(q) = \Omega(|S|\varepsilon^{-2}\log^3 n \tau_S(q))$.
    This means $\omega_{q,S}(p) < M$.

    Now that we have established $P_\lar$ has to be a subset of $S$ below $\dual{v'}$, we turn our attention to 
    building the proper data structures to find it.

    Consider a canonical set that contains the subset of $S$ below the line $\dual{v}$; denote this
	canonical set by $v^{\downarrow}$ and let $|v^\downarrow| = m$.
    We can build $O(m)$ canonical \emph{subgroups}, where each subgroup is a subset of $v^\downarrow$
    such that the points below any line $\dual{v'}$ parallel to $\dual{v}$ can be
    expressed as the union of at most $\log m$ canonical subgroups:
    this is done by projecting the points of $v^\downarrow$ onto a line perpendicular to $\dual{v}$,
    and then building a balanced binary tree on the resulting one-dimensional point set; each
    node of the balanced binary tree defines a canonical subgroup.
    The total size of the canonical subgroups created on $v^\downarrow$ is $O(m\log m)$.

    Remember that our goal was to find all the points $p$ below $\dual{v'}$ with $\omega_{q;S}(p) \ge M$.
    We can now use the canonical subgroups, since there $O(\log m)$ subgroups that cover all the
	points below $\dual{v'}$;
    we can query each such subgroup independently to find the subset of $P_\lar$ that lies in that subgroup;
    this would only blow up the query time by a $\log n$ factor.
    Furthermore, remember that the total size of all canonical sets was $O(n\log^2n)$ and thus the total size
    of all canonical subgroups is $O(n\log^3 n)$.
    Thus, we can afford to build  a separate data structure for each canonical subgroup.

    Consider a canonical subgroup $G \subset v^\downarrow$ that contains $g$ points.
    Our new goal is to find all  points $p \in G$  such that $\omega_{q;S}(p) \ge M$.
    We claim to approximate $\omega_{q;S}(p)$ it suffices to draw the 
    line $\ell=pq$ and then multiply the number of points of $S$ that lie at either side of 
    $\ell$.
    See Figure~\ref{fig:larged}(d). 
    Let $n_1 = |\ell^- \cap S|$ and $n_2 = |\ell^+ \cap S|$ where $\ell^+$ and $\ell^-$
    corresponds to the halfplane above and below $\ell$.
    W.l.o.g, assume $n_1 \ge n_2$.
    Consider a point $p$ with $\omega_{q;S}(p) \ge M$.
    We have
	\[M = \frac{C^{-1} \varepsilon^2 \sigma_{S}(q)}{\log n}  = \Omega\left(  \frac{C^{-1} \varepsilon^2 |S|\tau^2_{S}(q)}{\log n}\right)\]
	and thus we must have
	\[\Omega\left(  \frac{C^{-1} \varepsilon^2 |S|\tau^2_{S}(q)}{\log n}\right)  \le \omega_{q;S}(p) \le n_1 n_2 \le |S| n_2   \]
    and thus
    \[ n_2 = \Omega\left( \frac{\varepsilon^2\tau^2_S(q)}{\log n}\right).\]
    Let $r$ be the number of points below $\dual{v'}$.
    On the other hand, we have 
    \[ \omega_{q;S}(p) \ge (n_1 - r)(n_2 - r) \ge n_1n_2 - |S|r.\]
    Since $r =  O(\log^2 n \tau_{S}(q))$, it follows that 
    $n_2 / r = \Omega\left( \frac{\varepsilon^2 \tau_{S}(q)}{\log^3n} \right) = \Omega(\varepsilon^{-1})$
    and thus $n_1n_2$ is  a constant factor approximation of $\omega_{q;S}(p)$.
    This in turn implies $|S| n_2$ is also a constant factor approximation of $\omega_{q;S}(p)$.
    This is the motivation for defining the following concept:
    For a point $p$ below $\dual{v'}$, we call the minimum number of points of ${S}_i$ 
    at either side of line $\overline{pq}$ its {\em dissection value with respect
    to $q$ and ${S}_i$\/} (or its dissection value for short).
    In Figure~\ref{fig:larged}(d), $n_2$ is the dissection value of $p$, if $n_1 \ge n_2$.

    Thus, our goal can be further reduced to the following, that we call \emph{dissection reporting}: 
    For a canonical subgroup $G \subset v^\downarrow$ that contains $g$ points,
    build a data structure that can answer the following:
    given a query point $q$, a threshold $t$, and a line $\dual{v'}$ parallel to $\dual{v}$ such that 
    $G$ lies below $\dual{v'}$, find all points in $G$ with dissection value larger than $t$.
    To find the part of $P_\lar$ in $G$, we simply set $t = \Omega(M/|S|)$ for an appropriate constant
    hiding in the $\Omega(\cdot)$ notation; as we observed, the dissection value times $|S|$ is a
    constant approximation of $\omega_{q;S}(p)$ so this way, we can find a subset that contains
    all the point in $P_\lar \cap G$. 
    However, we will report a few extra points as well; nonetheless, we can observe that the number
    of extra points reported is only a constant factor larger since every point that gets reported
    contributes a lot to the simplicial depth of $q$ and there cannot be too many such points.

    To solve the dissection reporting problem, let $\dual{S}$ be the set of lines dual to $S$. 
    We build a shallow cutting $L'_i$ for the $(\le k_i)$-level of
    $\dual{S}$ (resp. $U'_i$ for the $(\ge k_i)$-level of $\dual{S}$), for $i=c^i, 1
    \le i \le t=O(\log |S|)$ for a large enough constant $c$.
    $L'_i$ is a convex polygonal chain and  if
    $c$ is set large enough, then $L'_i$ lies between the $k_i$-level and the $ck_i$-level of $S$.
    This means that the polygonal chains $L'_i$ will be non-intersecting and $L'_{i-1}$ is contained
    inside $L'_i$.
    Consider a line $\dual{p}$ in dual space (see Figure~\ref{fig:larged}(e)).
    If $\dual{p}$ intersects $L'_i$, then it creates an interval on $L'_i$, marked by the two intersection
    points of $\dual{p}$ with $L'_i$. 
    Let's call this $I_i(\overline{p})$; note that the end points of this interval come from a
    one dimensional domain which can be parameterized in various different ways, such as using
    polar angles from a point inside each polygonal chain.
    Consider $I_i(\overline{p})$ for every point $p \in G$ as well as the
    interval $I_i(\overline{q})$ defined by the
    line dual to the query point. 
    We say two intervals $I_1$ and $I_2$ intersect if they have a point in common and none of the intervals
    fully contains the other.
    Consider an index $i$ such that $I_i(\overline{q})$ intersects the interval $I_i(\overline{p})$  
    but $I_{i-1}(\overline{q})$ does not intersect the interval $I_{i-1}(\overline{p})$.
    This means that the intersection point of the lines $\overline{q}$ and $\overline{p}$ is between
    $L'_{i-1}$ and $L'_i$ (see Figure~\ref{fig:larged}(e)).
	Imagine for the upper chains, the intersection point of the lines $\overline{q}$ and $\overline{p}$ 
	is between $U'_{j-1}$ and $U'_j$ where $j > i$.
    This implies that the dissection value of line $pq$ is $\Theta(k_i)$.

    Motivated by the observation in the above paragraph, we do the following.
    For each pair of indices $(i,j)$, we create a data structure that is capable of finding all the
	lines $\dual{p} \in G$ that the intersection point of $\dual{p}$ and $\dual{q}$
	lies between $L'_{i-1}$ and $L'_i$ and also between $U'_{j-1}$ and $U'_j$.
	If we do this, then we can perform dissection reporting by issuing $O(\log^2 n)$ such queries:
	we first query $(t,t)$, then query $(t,t-1)$ and $(t-1,t)$ and so on.
	Specifically, to find points with dissection value $\Theta(k_i)$ we query
	$(t,i), (t-1,i), \cdots, (i,i)$ as well as $(i,t), (i,t-1), \cdots, (i,i)$.

	It thus remains to show how a query pair $(i,j)$ is answered.
    For every point $p \in G$, we create a tuple of four input intervals corresponding to the
	intersection of $\dual{p}$ with $L'_{i-1}, L'_i, U'_{j-1}$, and $U'_j$.
    The query also defines a tuple of four intervals in a similar fashion.
	The goal is to store the input tuples of intervals in a data structure
    such that given a query tuple of intervals
    we can find all the input interval tuples such that they
	guarantee $\dual{p}$ and $\dual{q}$ intersect between $L'_{i-1}$ and $L'_i$ and also between $U'_{j-1}$ and $U'_j$.
	Let $([a_1,b_1], [a_2,b_2], [a_3,b_3], [a_4,b_4])$ an input tuple of four intervals.
	We create an eight-dimensional  point $(a_1, b_1, \cdots, a_4, b_4)$.

	We now claim the problem can be solved using dominance reporting in eight dimensional space.
	In dominance reporting, a point $p \in \R^d$ is said to \emph{dominate} a point $q \in \R^d$ if
	any coordinates of $p$ is greater than that of $q$. 
	In dominance reporting, we are to store a set of points in a data structure such that given a point $q$,
	we can report all the points dominated by $q$.
	Observe that in our subproblem, we can also map the tuple of intervals corresponding to the query
	to the eight dimensional space. 
	Here, we can express the constraints that 
	$\dual{p}$ and $\dual{q}$ intersect between $L'_{i-1}$ and $L'_i$ and also between $U'_{j-1}$ and $U'_j$
	as inequalities between respective coordinates of the eight dimensional points, meaning,
	the problem can be solved by building a constant number of eight dimensional dominance reporting
	data structures and issuing a constant number of dominance reporting queries.
	
    Dominance reporting queries can be answered in polylogarithmic time,  using  data structure that 
	needs near-linear space and preprocessing time~\cite{AAL3,Chan.Kasper.Mihai.orth,Bentley.80,Chazelle.filtering.search,Chazelle.Guibas.fractional.II}
	polylogarithmic time, we can find the subset of $P_\lar$ in $G$ using polylogarithmic space overhead and polylogarithmic
    query time.

    Thus, in overall, our data structure will use $\tilde{O}(n)$ space and preprocessing time and can 
	answer queries in $\tilde{O}(1)$ time.
\end{proof}

\section{Proof of Lemma~\ref{lem:calc}}\label{sec:calc}
\lemcalc*
\begin{proof}
    We recall the definition of the dissection value from the proof of Lemma~\ref{lem:findlarge}:
    For a point $p \in R$, we call the minimum number of points of $S$ 
    at either side of line $\overline{pq}$ its {\em dissection value with respect
    to $q$ and ${S}$} (or its dissection value for short).
	Let $m$ be the number of points in $h$.

	We build data structure for approximate range counting~\cite{Afshani.Chan.SOCG07}, 
	and halfplane range reporting~\cite{cgl85} on $S$.
	Using these, for every point 
	$p \in R$ we can obtain a constant factor approximation of its dissection value.

	Consider a point $p \in R$.
	Draw the line $\ell=pq$ and 
    let $n_1 = |\ell^- \cap S|$ and $n_2 = |\ell^+ \cap S|$ where $\ell^+$ and $\ell^-$
    corresponds to the halfplane above and below $\ell$
	(very similar to the situation in Lemma~\ref{lem:findlarge}; see also Figure~\ref{fig:larged}(d)).
	Assume $n_1 \le n_2$.
	Let $\Delta_p$ be the set of triangles that contain $q$ and have $p$ as one of their vertices.

	As first case, assume $n_1 \ge  2\varepsilon^{-1}(m +|R|)$.
	At most $|S||R|$ triangles from $\Delta_p$ can have two points from $R$.
	However, $\Delta_p$ contains at least 
	\[ (n_1 - m)(n_2 -m) \ge n_1n_2 - |S|m \] 
	triangles.
	Observe that if $n_1 \ge 2\varepsilon^{-1}(m + |R|)$ then 
	\[ n_1n_2 - |S|m \ge \varepsilon^{-1} |S||R|. \] 
	Thus, the number of triangles in $\Delta_p$ with two points from $R$ is negligible.
	So, we focus on the triangles that have exactly one point from $R$. 
	This is at most $n_1n_2$ and at least $(n_1 - m)(n_2-m)$.
	Again, an easy calculation yield that $n_1n_2 \ge \varepsilon^{-1}m|S|$ and thus
	$n_1n_2$ is a relative $(1+O(\varepsilon))$-approximation of the number of triangles
	in $\Delta_p$.

	Thus, it suffices to handle points with dissection value less than $2\varepsilon^{-1}(m +|R|)$.
	In the rest of this proof, we assume all the points in $R$ have dissection value
	less than this.

\begin{figure}[b]
    \centering
    \includegraphics[scale=0.5]{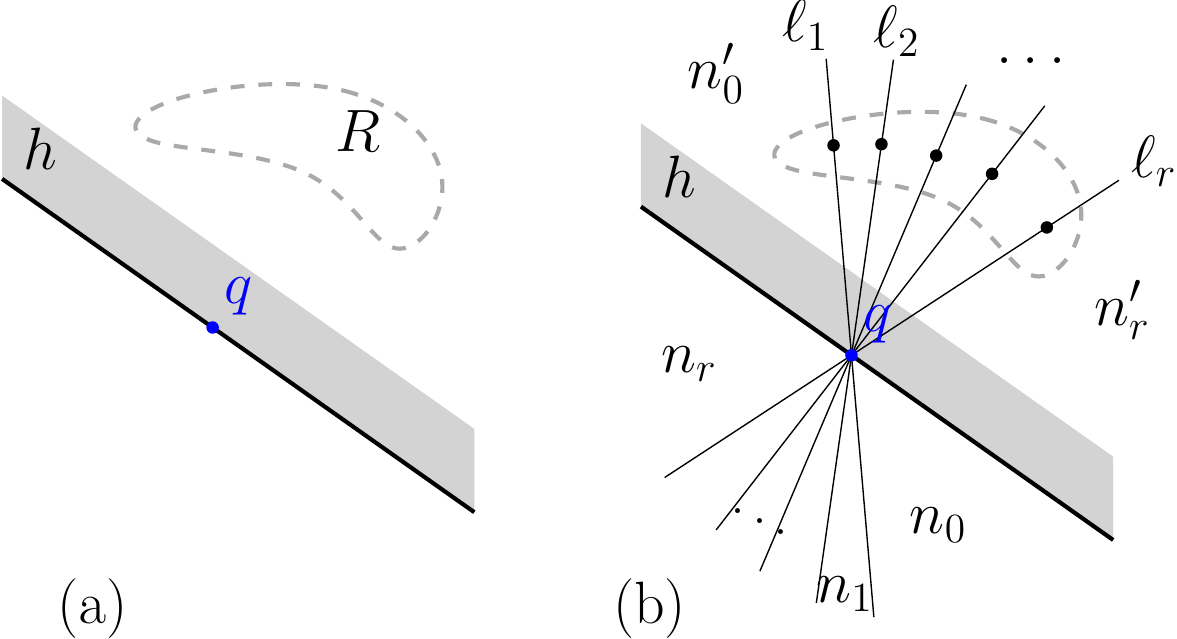}
    \caption{(a) The input configuration for Lemma~\ref{lem:calc}. (b) For each
    point $p \in R$, a line through $p$ and $q$ is created. This subdivides the
    plane into ``wedges'' that are ifinite triangles with $q$ as their only vertex. By knowing
    the exact number of points inside each wedge, we can compute the simplicial depth in $O(r)$ time.}
    \label{fig:small}
\end{figure}

	As a second case,	assume $\tau_S(q) = \tO(\varepsilon^{-1}|R|)$ which also implies $m = \tO(\varepsilon^{-1}|R|)$,
	and that the dissection value of the points in $R$ is at most $2\varepsilon^{-1}(m + |R|) = 
	\tO(\varepsilon^{-1}|R|)$.
	Let $p_1, \cdots, p_r \in R$ be all the points in $R$.
	We create the lines $\ell_1, \cdots, \ell_r$ by connecting $p_i$ to $q$.
	Using halfplane range reporting, we can exactly compute the number of points at one side of 
	each line $\ell_i$ in $\tO(\varepsilon^{-1}|R|)$ time. 
	Over all the points in $R$ this will take $\tO(\varepsilon^{-2}|R|^2)$ time. 
	Given these values, we can compute the number of points that lie inside each ``wedge'' created
	by the lines $\ell_1, \cdots, \ell_r$ (see Figure~\ref{fig:small}(b)) in
	$\tO(\varepsilon^{-2}|R|^2)$ time. 
	Given these values, the total number of triangles that contain $q$ and have one or two 
	points from the set $\left\{ p_1, \cdots, p_r \right\}$ can be counted in $O(r)$ time
    (see~\cite{GilStWi1992,KhullerJo1990}).

	Finally, we assume $\tau_S(q) \ge \varepsilon^{-1}|R| \log^2 n$.
	This case can be handled using similar ideas as the previous two.
	Here,  we claim among triangles in $\Delta_p$, the number of those that
	include two points from $R$ is negligible: the number of such triangles is at most
	$|S| |R|^2$ where at the simplicial depth of $q$ is at least $\Omega(|S|\tau^2_S(q))$
	by Lemma~\ref{lem:SH}.
	We have
	\[|S|\tau^2_S(q) \ge |S| \varepsilon^{-2}|R|^2 = \omega(\varepsilon^{-1}|S| |R|^2)\]
	so we can safely ignore triangles that contain two points from $R$.
	This limits us to triangles that have exactly one point from $R$.
	Let $p_1, \cdots, p_r \in R$ be the points in $R$.
	Again, we create the lines $\ell_1, \cdots, \ell_r$ by connecting $p_i$ to $q$.
	Using halfplane range reporting, we can exactly compute the number of points at each side
	of the line $\ell_i$ in $\tO(\varepsilon^{-1}|R|)$ time. 
	Over all the points in $R$ this will take $\tO(\varepsilon^{-2}|R|^2)$ time. 
	As before, these values enable us to compute the number of triangles containing one point from $R$.

	Combining all these cases, we obtain a relative $(1+O(\varepsilon)$-approximation.
	By scaling $\varepsilon$ by a constant, we can obtain a relative $(1+\varepsilon)$-approximation.
	The total query time is $\tO(\varepsilon^{-2}|R|^2)$.
\end{proof}

\section{Approximation in High Dimensions}
\label{sec:hdim}
In this section, we present two approximation algorithms for simplicial
depth in high dimensions, each with a different worst case scenario. By
combining these strategies, we obtain a constant factor
approximation algorithm with $\hdtime$ running time.

\subsection{Small Simplicial Depth: Enumeration}
Let $P \subset \R^d$ be a set and $q \in \R^d$ a query point. If $\sigma_P(q)$
is small, a simple counting approach that iterates
through all simplices $\Delta \in \Delta_P$ leads to an
efficient algorithm. The key is to construct a graph that contains exactly one
node per simplex $\Delta \in \Delta_P$. Then, counting can be
carried out by a breadth-first search and we avoid looking at subsets of $P$
that do not contain $q$ in their convex hull.
For this, we use the Gale transform to dualize the problem. We
shortly restate important properties of the Gale transform. For more
details see~\cite{thomas06lectures}.
Let in the following $\ori$ denote the origin.

\begin{lemma}\label{lem:gale}
  Let $P = \{p_1,\dots,p_n\} \subset \R^d$ be a point set with $\sigma_P(\ori) >
  0$. Then, there is a set
  $\bar{P} = \{\bar{p}_1,\dots,\bar{p}_n\} \subset \R^{n-d-1}$ such that a
  $(d+1)$-subset $P' \subseteq P$ contains $\ori$ in its convex hull iff $\bar{P}
  \setminus \{ \bar{p}_i \mid p_i \in P'\}$ defines a facet of $\conv(\bar{P})$.
  \qed{}
\end{lemma}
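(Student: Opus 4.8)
The plan is to take $\bar P$ to be the classical (linear) Gale transform of the homogenized configuration $\hat p_i=(p_i,1)\in\R^{d+1}$ and to read off the claimed equivalence from the standard face--coface correspondence of Gale duality, using general position to sharpen ``face with $n-d-1$ vertices'' to ``facet'' and ``$\ori$ in the relative interior'' to ``$\ori$ in the convex hull''. Concretely: since $\sigma_P(\ori)>0$ and $P\cup\{\ori\}$ is in general position, $\ori$ lies in the interior of $\conv(P)$, so $\conv(P)$ is full-dimensional and the vectors $\hat p_1,\dots,\hat p_n$ span $\R^{d+1}$. Let $M$ be the $(d+1)\times n$ matrix with columns $\hat p_i$; then $\ker M$ has dimension exactly $n-d-1$. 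Fix a basis of $\ker M$, make it the columns of an $n\times(n-d-1)$ matrix $N$, and let $\bar p_i\in\R^{n-d-1}$ be the $i$-th row of $N$. The only property we need is the defining one: a vector $\mu\in\R^n$ has the form $\bigl(\langle w,\bar p_1\rangle,\dots,\langle w,\bar p_n\rangle\bigr)$ for some $w\in\R^{n-d-1}$ iff $\mu\in\ker M$, i.e.\ iff $\sum_i\mu_i p_i=\ori$ and $\sum_i\mu_i=0$. A short computation with the kernel/image relations of $M$ and $N$ (using $n\ge d+2$ and that no $d+1$ of the $p_i$ are affinely dependent) shows $\conv(\bar P)$ is full-dimensional with $\ori$ in its interior, so it is a genuine $(n-d-1)$-polytope.

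Next I would invoke the standard Gale duality theorem (see \cite{thomas06lectures}): for every $J\subseteq[n]$, the set $\{\bar p_j:j\in J\}$ is exactly the vertex set of a proper face of $\conv(\bar P)$ if and only if $\ori\in\operatorname{relint}\conv\{p_i:i\notin J\}$. I then specialize to $J=\{i: p_i\notin P'\}$, so that $|\{i:i\notin J\}|=|P'|=d+1$. If $\ori\in\conv(P')$ then, because $\ori$ together with any $d$ of the $p_i$ is affinely independent, $\ori$ lies on no proper face of the $d$-simplex $\conv(P')$, hence $\ori\in\operatorname{relint}\conv(P')$; the correspondence then makes $\{\bar p_j:j\in J\}$ the vertex set of a face $F$ of $\conv(\bar P)$, and since no $d+1$ of the $p_i$ lie on a common hyperplane, the relation between rows of $N$ and affine dependences among the $p_i$ forces these $n-d-1$ points to be affinely independent, so $\dim F=n-d-2$, i.e.\ $F$ is a facet. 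Conversely, if $\{\bar p_j:j\in J\}$ is the vertex set of a facet then it is in particular the vertex set of a face, and the correspondence gives $\ori\in\operatorname{relint}\conv(P')\subseteq\conv(P')$. This proves the lemma.

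The work is almost entirely bookkeeping rather than ideas, and the two places that need care are: (i) showing the $n-d-1$ complementary Gale points are affinely independent --- which is precisely the translation, through the defining property of the Gale transform, of ``no $d+1$ points of $P$ lie on a common hyperplane'' --- so that a face of the right cardinality is automatically a facet; and (ii) upgrading $\ori\in\conv(P')$ to $\ori\in\operatorname{relint}\conv(P')$, which uses general position of $P\cup\{\ori\}$. One must also be careful to state the Gale correspondence in the ``vertex-set-of-a-face'' form rather than merely ``lies on a face'', so the facet count comes out exactly. The degenerate case $n=d+1$ is trivial and can be treated separately (there is a single candidate simplex).
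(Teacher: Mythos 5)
Your construction of $\bar P$ is not the right one, and the ``standard Gale duality theorem'' you invoke is misstated in a way that breaks the argument.

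The defining relation you set up is correct: $\mu\in\mathrm{image}(N)$ iff $\sum_i\mu_i p_i=\ori$ \emph{and} $\sum_i\mu_i=0$. But that second constraint forces $\sum_i\bar p_i=\ori$, since $\mathbf 1=(1,\dots,1)$ is a row of $M$ and hence orthogonal to $\ker M=\mathrm{image}(N)$, so $\mathbf 1^{T}N=0$. Thus $\ori$ is the centroid of $\bar P$. If you now take the (point) Gale transform of $\bar P$, a short computation shows it equals the \emph{centered} configuration $\{p_i-\bar c\}$, where $\bar c=\frac1n\sum_i p_i$. Consequently, the correct face correspondence in the direction you need reads: $\{\bar p_j:j\in J\}$ is the vertex set of a proper face of $\conv(\bar P)$ iff $\bar c\in\operatorname{relint}\conv\{p_i:i\notin J\}$ --- with the \emph{centroid} $\bar c$, not $\ori$. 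Replacing $\bar c$ by $\ori$ in this sentence, as you do, is the gap.

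The mismatch is not cosmetic. Take $d=1$, $P=\{-1,1,9\}$, $\ori=0$. Homogenizing gives $M=\bigl(\begin{smallmatrix}-1&1&9\\1&1&1\end{smallmatrix}\bigr)$, $\ker M=\langle(4,-5,1)\rangle$, so $\bar P=\{4,-5,1\}\subset\R^{1}$ and $\conv(\bar P)=[-5,4]$. Its facets are $\{\bar p_1\}$ and $\{\bar p_2\}$, whose complements are $\{p_2,p_3\}=\{1,9\}$ and $\{p_1,p_3\}=\{-1,9\}$. But $\{1,9\}$ does \emph{not} contain $0$, while $\{-1,1\}$ does --- your $\bar P$ gives exactly the wrong classification. (Note $\bar c=3$: the facets of your $\conv(\bar P)$ do correctly pick out the two pairs whose convex hull contains $3$.)

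The repair: the condition ``$\ori\in\conv(P')$'' is a statement about \emph{linear} (not affine) dependences among the $p_i$, so you should use the Gale transform of $P$ as a \emph{vector} configuration. Let $M_0$ be the $d\times n$ matrix with columns $p_i$ (no appended $1$'s); $\ker M_0$ has dimension $n-d$. Since $\sigma_P(\ori)>0$ and $P\cup\{\ori\}$ is in general position, $\ori$ is interior to $\conv(P)$, so there is a strictly positive $\alpha\in\ker M_0$. This says the dual cone $\mathrm{pos}(\bar P^{(0)})\subset\R^{n-d}$ is pointed; slice it by the affine hyperplane where the $\alpha$-coordinate equals $1$ to obtain $\bar P\subset\R^{n-d-1}$. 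Now the cone version of Gale duality --- $\{\bar p^{(0)}_j:j\in J\}$ spans a face of $\mathrm{pos}(\bar P^{(0)})$ iff $\exists\lambda_i>0$ ($i\notin J$) with $\sum_{i\notin J}\lambda_i p_i=\ori$ --- translates, after slicing and using general position, precisely into the statement of the lemma. Your observations in (i) and (ii) about general position upgrading ``face'' to ``facet'' and ``$\ori\in\conv$'' to ``$\ori\in\operatorname{int}\conv$'' are correct and carry over unchanged; only the construction of $\bar P$ and the form of the duality statement need to be fixed. (The paper cites the lemma from~\cite{thomas06lectures} without proof, so there is no paper argument to compare against.)
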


Consider now the graph $G_P(q) = (V, E)$ with $V = \Delta_P$.
Two simplices $\Delta, \Delta'$ are adjacent iff $\Delta'$ can be obtained from
$\Delta$ by swapping one point in $\Delta$ with a different point in $P$. We
call $G_P(q)$ the \emph{simplicial graph} of $P$ with respect to $q$.

\begin{lemma}\label{lem:simgraph}
Let $P \subset \R^d$ be a set of size $n$. Then, $G_P(q)$ is $(n-d-1)$-connected
and $(n-d-1)$-regular.
\end{lemma}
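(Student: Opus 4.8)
The plan is to prove both claims—that $G_P(q)$ is $(n-d-1)$-regular and $(n-d-1)$-connected—by translating everything through the Gale transform of Lemma~\ref{lem:gale}. Under this duality, the vertices of $G_P(q)$ (the $(d+1)$-subsets $P' \subseteq P$ containing $q$ in their convex hull) correspond exactly to the facets of the polytope $Q = \conv(\bar P) \subset \R^{n-d-1}$, where a facet is identified with the $(n-d-1)$-subset of $\bar P$ that spans it. The swapping operation from Lemma~\ref{lem:change}—replace one vertex $v$ of the simplex $\Delta$ by a point $p \in P$—becomes, on the dual side, the operation of removing $\bar p$ from a facet-defining set and inserting $\bar v$; geometrically this is precisely a \emph{pivot} across a ridge of $Q$, i.e.\ moving from one facet to an adjacent facet along their common $(n-d-3)$-face. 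So $G_P(q)$ is naturally isomorphic to the \emph{facet adjacency graph} (ridge graph) of the simplicial polytope $Q$, and both statements reduce to standard facts about simplicial polytopes.

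\textbf{Regularity.} First I would verify $(n-d-1)$-regularity. Fix a vertex $\Delta$ of $G_P(q)$, corresponding to a facet $F$ of $Q$. Since $P \cup \{q\}$ is in general position, $Q$ is a simplicial polytope, so $F$ is a simplex with exactly $n-d-1$ facets of its own, i.e.\ $n-d-1$ ridges of $Q$. Each ridge of a polytope lies on exactly two facets, so $F$ has exactly $n-d-1$ neighbouring facets, all distinct. Dually, this says: for the simplex $\Delta = S$ with $q \in \conv(S)$, Lemma~\ref{lem:change} gives, for each $p \in P \setminus S$, a unique vertex $v$ of $S$ to swap out, but one must check that the resulting simplex depends only on which \emph{facet} of $\conv(S)$ the ray $\overrightarrow{qp}$ hits—so the $n-d-1$ facets of $\conv(S)$ (equivalently the $d+1$ choices of $v$, but grouped by the facet crossed) index exactly $n-d-1$ distinct neighbours. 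I would phrase the count directly via the Gale side since there the simpliciality makes it immediate.

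\textbf{Connectivity.} For $(n-d-1)$-connectedness I would invoke Balinski's theorem in the polar form: the facet-adjacency graph of a $k$-dimensional polytope whose polar dual has all facets being simplices is $k$-connected—equivalently, the ridge graph of a simplicial $k$-polytope is $k$-connected. Here $Q$ is a simplicial polytope of dimension $n-d-1$ (it is full-dimensional because $\sigma_P(q) > 0$ forces $\bar P$ to affinely span $\R^{n-d-1}$), so its ridge graph is $(n-d-1)$-connected, and that graph is $G_P(q)$. The cleanest route is: apply Balinski to the polar polytope $Q^\Delta$, whose \emph{vertex} graph is the ridge graph of $Q$; Balinski says the vertex graph of an $m$-polytope is $m$-connected, giving the bound with $m = n-d-1$.

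\textbf{Main obstacle.} The routine calculations are not the issue; the delicate point is making the isomorphism between $G_P(q)$ and the ridge graph of $Q$ fully precise—in particular checking that "adjacent via a single swap" on the primal side matches "sharing a ridge" on the dual side, including that no two distinct swaps produce the same neighbour (needed for regularity) and that every dual ridge-adjacency arises from an admissible swap (so the two graphs have the same edges, not just a containment). This requires carefully tracking how the Gale correspondence sends $(d+1)$-subsets to complements and facets, and using general position to rule out coincidences. Once that dictionary is nailed down, both parts follow from the simpliciality of $Q$ and Balinski's theorem with no further work.
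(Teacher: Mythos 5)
Your proposal matches the paper's proof in essentially every respect: both pass to the Gale transform, identify $G_P(q)$ with the ridge graph of the simplicial polytope $\conv(\bar P)$ (equivalently the $1$-skeleton of its polar dual), obtain $(n-d-1)$-connectivity from Balinski's theorem (the paper invokes it without naming it), and get $(n-d-1)$-regularity from the fact that each facet is an $(n-d-2)$-simplex with $n-d-1$ ridges, or equivalently from the $n-d-1$ swap partners supplied by Lemma~\ref{lem:change}. One small confusion worth fixing: your parenthetical about ``the $n-d-1$ facets of $\conv(S)$'' is garbled---$\conv(S)$ is a $d$-simplex with $d+1$ facets, and the $n-d-1$ neighbours are indexed by the points of $P\setminus S$, not by facets (two different $p$'s crossing the same facet of $\conv(S)$ still produce distinct neighbours)---but since you correctly defer the count to the Gale side, and the paper's own regularity argument is the straightforward point-count via Lemma~\ref{lem:change}, this does not affect the substance of the proof.
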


\begin{proof}
We assume w.l.o.g.\ that $q = \ori$.
Let $\Delta, \Delta'$ be two adjacent nodes in $G_P(q)$.
Furthermore let $\bar{P}$ denote the Gale transform of $P$. Set $\bar{\Delta} = \{
\bar{p} \mid p \in P \setminus \Delta \}$ and $\bar{\Delta}' = \{ \bar{p} \mid p \in
P \setminus \Delta'\}$.
By Lemma~\ref{lem:gale}, the two sets $\bar{\Delta}$ and $\bar{\Delta}'$ define
facets of $\conv(\bar{P})$. Since $\Delta$ and $\Delta'$ are adjacent, we have
$|\Delta \cap \Delta'| = d$ and hence $|\bar{\Delta} \cap \bar{\Delta}'| = n - d
- 2$. Thus, the facets defined by $\bar{\Delta}$ and $\bar{\Delta}'$ share a
ridge. Hence, $G_P(q)$ is isomorph to the $1$-skeleton of the
polytope dual to $\conv(\bar{P})$. In particular, this implies that $G_P(q)$ is
$(n-d-1)$-connected. It remains to show that the graph is $(n-d-1)$-regular.
Let $\Delta \in V$ be a node.  Lemma~\ref{lem:change} states that each of the
$n-d-1$ points in $P\setminus \Delta$ can be swapped in, each time resulting in
a distinct simplex.
\end{proof}

Since $G_P(q)$ is connected, we can count the number of vertices using BFS\@.

\begin{lemma}\label{lem:bfs}
  Let $P \subset \R^d$ be a set of size $n$ and $q \in \R^d$ a query point. Then,
  $\sigma_P(q)$ can be computed in $O(n \sigma_P(q))$ time.
\end{lemma}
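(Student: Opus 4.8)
The plan is to run a breadth-first search on the simplicial graph $G_P(q)$, whose structure was pinned down in Lemmata~\ref{lem:simgraph} and~\ref{lem:change}. Since $G_P(q)$ is connected, a BFS started from any single node visits all of $V = \Delta_P$, so $|V| = \sigma_P(q)$ is recovered as the number of visited nodes. The two things to check are (i) that we can find a starting node, and (ii) that each step of the BFS is cheap enough that the total cost is $O(n\,\sigma_P(q))$.

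For the starting node, I would invoke \Caratheodory's theorem: since $\sigma_P(q) > 0$ (otherwise there is nothing to compute and we can detect $q \notin \conv(P)$ in $O(nd)$ time, e.g.\ by an LP feasibility test), there is a $(d+1)$-subset $\Delta_0 \subseteq P$ with $q \in \conv(\Delta_0)$; it can be extracted in $\mathrm{poly}(d)$ time after such an LP call, which is absorbed into the running time since $d$ is constant. For the BFS step itself, the key point is that $G_P(q)$ is $(n-d-1)$-regular (Lemma~\ref{lem:simgraph}), and the neighbors of a node $\Delta$ are described explicitly by Lemma~\ref{lem:change}: for each of the $n-d-1$ points $p \in P \setminus \Delta$ there is a \emph{unique} vertex $v$ of $\Delta$ such that $\Delta \cup \{p\} \setminus \{v\}$ still contains $q$, and this $v$ is found by shooting the ray from $q$ in direction $q-p$ and seeing which facet of $\conv(\Delta)$ it crosses — an $O(d^{O(1)})$ computation. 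So generating the entire neighbor list of one node costs $O(n)$ time (constant $d$), and each visited node is expanded once.

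The remaining ingredient is the bookkeeping: a BFS must test whether a freshly generated simplex has already been seen. Since there are only $\sigma_P(q)$ distinct simplices ever generated, I would keep them in a dictionary (say a balanced BST or hash table) keyed by the sorted $(d+1)$-tuple of point indices; each lookup/insertion costs $O(d\log\sigma_P(q)) = \tilde O(1)$, or $O(d)$ expected with hashing. Summing over all $\sigma_P(q)$ nodes, each generating $n-d-1$ candidate neighbors at $O(n)$ total work plus $\tilde O(1)$ dictionary work per candidate, gives $O(n\,\sigma_P(q))$ overall (suppressing the $\log$ and constant-$d$ factors, which is the level of precision the statement intends).

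I do not anticipate a serious obstacle here — the structural work has already been done in the preceding lemmas. The one point requiring a word of care is making the neighbor-generation deterministic and degeneracy-free: Lemma~\ref{lem:change} relies on general position of $P \cup \{q\}$ so that the ray from $q$ hits the relative interior of a unique facet of $\conv(\Delta)$, which is exactly the standing assumption of the paper, so no perturbation is needed. The only mild subtlety is confirming that distinct swap-in points $p$ yield distinct simplices (so the degree is genuinely $n-d-1$ and no neighbor is double-counted), which is precisely the uniqueness clause invoked at the end of the proof of Lemma~\ref{lem:simgraph}.
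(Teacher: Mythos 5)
Your proposal is correct and matches the paper's approach exactly: the paper's entire proof is the one-line remark that since $G_P(q)$ is connected, one can count its vertices by BFS, and you have simply supplied the routine details (locating a starting simplex via LP/\Caratheodory, generating neighbors via the swapping operation of Lemma~\ref{lem:change}, and the dictionary bookkeeping). Your observation that the visited-set lookups add a $\log$ factor unless one uses hashing is a fair and honest caveat that the paper glosses over, but it does not change the substance of the argument.
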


\subsection{Large Simplicial Depth: Sampling}\label{subsec:sample}
If the simplicial depth is large, the enumeration approach becomes infeasible.
In this case we apply a simple random sampling algorithm.

\begin{lemma}\label{lem:sampling}
  Let $P \subset \R^d$ be a set and $q \in \R^d$ a query point. Furthermore, let
  $\eps, \delta > 0$ be constants and let $m \in \N$ be a parameter. If $\sigma_P(q)
  \geq m$, then $\sigma_P(q)$ can be $(1+\eps)$-approximated in $\tilde{O}(n^{d +
    1} / m)$ time with error probability $O(n^{-\delta})$.
\end{lemma}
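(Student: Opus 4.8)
The plan is to estimate $\sigma_P(q)$ by sampling $(d+1)$-subsets of $P$ uniformly at random and checking, for each sampled subset, whether it contains $q$ in its convex hull. Let $N = \binom{n}{d+1} = \Theta(n^{d+1})$ be the total number of candidate simplices, and let $p = \sigma_P(q)/N$ be the probability that a uniformly random $(d+1)$-subset contains $q$; each containment test is a single linear feasibility check doable in $O(1)$ time for fixed $d$. If we draw $k$ independent samples and let $Z$ be the number of hits, then $\E[Z] = kp$ and $\hat\sigma := (N/k) Z$ is an unbiased estimator of $\sigma_P(q)$. The whole algorithm is obviously $O(k)$ time up to the (constant, for fixed $d$) cost of each test, so the content is in choosing $k$ and proving the concentration bound.

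First I would invoke the multiplicative Chernoff bound on $Z$: for $\eps \in (0,1)$, $\Pr[|Z - kp| \ge \eps kp] \le 2\exp(-\eps^2 kp / 3)$. To make this at most $n^{-\delta}$ we need $kp = \Omega(\eps^{-2} \delta \log n)$, i.e. $k = \Omega(\eps^{-2}\delta \log n / p) = \Omega(\eps^{-2}\delta \log n \cdot N / \sigma_P(q))$. Here is where the hypothesis $\sigma_P(q) \ge m$ enters: we do not know $\sigma_P(q)$, but we are promised a lower bound $m$ on it, so it suffices to set $k = \Theta(\eps^{-2}\delta \log n \cdot N / m) = \tilde{O}(n^{d+1}/m)$, since $\eps,\delta$ are constants and $N = \Theta(n^{d+1})$. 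With this choice of $k$, if the true depth is $\sigma_P(q) \ge m$ then $kp \ge k m / N = \Omega(\eps^{-2}\delta\log n)$, so the Chernoff bound applies and yields $|\hat\sigma - \sigma_P(q)| \le \eps\,\sigma_P(q)$ with probability $1 - O(n^{-\delta})$, which is exactly a $(1+\eps)$-approximation. The running time is $O(k) = \tilde{O}(n^{d+1}/m)$ as claimed (the $\tilde{O}$ absorbs the $\log n$ from the union/Chernoff bound).

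The main obstacle — really the only subtlety — is that the number of samples $k$ must be fixed \emph{in advance} without knowing $\sigma_P(q)$, so the argument must be driven entirely by the promised lower bound $m$: one sets $k$ as a function of $m$ and then checks that the Chernoff tail is still small for every actual value of $\sigma_P(q) \ge m$ (larger depth only makes $kp$ larger and the bound stronger, so monotonicity takes care of the range $\sigma_P(q) > m$). A secondary point to address is sampling $(d+1)$-subsets uniformly: one can either sample $d+1$ indices without replacement, or sample with replacement and discard the $O(n^{-1})$-fraction of degenerate draws (repeated indices) — since $P \cup \{q\}$ is in general position this changes $p$ by a negligible factor and does not affect the asymptotics. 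Finally, one should note the containment test: $q \in \conv(P')$ for a $(d+1)$-set $P'$ in general position is decided by solving for the barycentric coordinates of $q$ and checking their signs, which is $O(d^3) = O(1)$ time for constant $d$.
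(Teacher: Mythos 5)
Your proposal is correct and matches the paper's proof essentially line for line: the paper also samples $k = \lceil 4\delta n^{d+1}\log n / (\eps^2 m)\rceil$ random $(d+1)$-subsets, rescales the hit count by $n^{d+1}/k$, and applies a multiplicative Chernoff bound driven by the promised lower bound $\sigma_P(q) \ge m$. The small remarks you add (monotonicity over the range $\sigma_P(q) > m$, with/without replacement, barycentric coordinate test) are consistent with and slightly more explicit than the paper's exposition.
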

\begin{proof}
  Let $\Delta_1,\dots, \Delta_k$ be $k$ random $(d+1)$-subsets of P for 
  $k = \left\lceil\frac{4 \delta n^{d + 1} \log n}{\eps^2 m}\right\rceil$. For each random
  subset $\Delta_i$, let $X_i$ be $1$ iff $q \in \conv(\Delta_i)$ and $0$
  otherwise. We have $\mu = \E[\sum_{i=1}^{k} X_i] =
  k \frac{\sigma_P(q)}{n^{d+1}} = \frac{4 \delta \sigma_P(q) \log n}{\eps^2 m}
  \geq \frac{4 \delta}{\eps^2}\log n$. Applying the Chernoff
  bound, we get $\Pr[ |\sum_{i=1}^{k} X_i - \mu| \geq \eps\mu ] =
  O(n^{-\delta})$. Thus, $\frac{n^{d+1}}{k} X$ is a $(1+\eps)$-approximation of
  $\sigma_P(q)$ with error probability $O(n^{-\delta})$.

  For $d=O(1)$, we can test in $O(1)$ whether a given $(d+1)$-subset of $P$
  contains a point in its convex hull. Hence, the running time is dominated by
  the number of samples.
\end{proof}

\subsection{Combining the Strategies}

\begin{theorem}\label{thm:hdapx}
  Let $P \subset \R^d$ be a set and $q \in \R^d$ a query point. Furthermore,
  let $\eps > 0$ and $\delta>0$ be constants.  Then, $\sigma_P(q)$ can be
  $(1+\eps)$-approximated in $\hdtime$ time with error probability $O(n^{-\delta})$.
\end{theorem}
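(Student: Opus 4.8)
The plan is a threshold combination of the exact enumeration of Lemma~\ref{lem:bfs} and the random sampling of Lemma~\ref{lem:sampling}, with the threshold chosen so that the two running times balance. Set $m = \lceil n^{d/2}\rceil$. First I would decide whether $q \in \conv(P)$; for constant $d$ this is a linear program solvable in $O(n)$ time, and when $q \in \conv(P)$ it also furnishes, via \Caratheodory's theorem, an initial simplex $\Delta_0 \in \Delta_P(q)$. If $q \notin \conv(P)$ we simply return $\sigma_P(q) = 0$.

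Next, run the breadth-first search underlying Lemma~\ref{lem:bfs} on the simplicial graph $G_P(q)$ starting from $\Delta_0$, but \emph{abort} the moment more than $m$ distinct nodes have been discovered. Expanding a node costs $\tO(n)$: by Lemma~\ref{lem:change} each of the $n-d-1$ candidate swaps is evaluated in $O(1)$ time for fixed $d$, and testing whether the resulting $(d+1)$-tuple has already been visited costs $\tO(1)$ using a dictionary keyed by sorted index tuples. Hence the truncated BFS runs in $\tO(nm) = \hdtime$ time. If it halts without aborting, then by Lemma~\ref{lem:simgraph} ($G_P(q)$ is connected) it has enumerated all of $\Delta_P(q)$ and returns $\sigma_P(q)$ exactly.

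If instead the BFS aborts, it has certified that $\sigma_P(q) > m$, hence in particular $\sigma_P(q) \ge m$. I would then invoke Lemma~\ref{lem:sampling} with this value of $m$ and the given constants $\eps, \delta$; it returns a $(1+\eps)$-approximation of $\sigma_P(q)$ with error probability $O(n^{-\delta})$ in $\tO(n^{d+1}/m) = \tO(n^{d+1-d/2}) = \hdtime$ time. In either branch the total time is $\hdtime$, and the only randomness is a single call to Lemma~\ref{lem:sampling} (plus hashing in the BFS, which can be made deterministic with a balanced search tree at a polylogarithmic cost), so the overall error probability stays $O(n^{-\delta})$.

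There is no substantive obstacle here — this is the standard ``run the cheap exact algorithm up to a fixed budget, otherwise fall back to sampling'' pattern — but two points warrant care. First, one must guarantee a valid start simplex for the BFS, which is exactly what the $\conv(P)$ test together with \Caratheodory provides. Second, one must check that the lower bound $\sigma_P(q) > m$ produced by the aborted BFS matches the hypothesis $\sigma_P(q) \ge m$ required by Lemma~\ref{lem:sampling}; it does. The rounding of $n^{d/2}$ for odd $d$ is immaterial, affecting the bound only by a constant factor.

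\begin{proof}

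\end{proof}
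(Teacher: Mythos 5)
Your proposal matches the paper's proof exactly: run the BFS of Lemma~\ref{lem:bfs} until $n^{d/2}$ nodes are explored (cost $\tO(n^{d/2+1})$), and if the graph is not exhausted conclude $\sigma_P(q) \ge n^{d/2}$ and fall back to Lemma~\ref{lem:sampling}. The extra care you take about obtaining a start simplex and about dictionary overhead is sound implementation detail that the paper leaves implicit.
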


\begin{proof}[Proof of Theorem~\ref{thm:hdapx}]
  We apply the algorithm from Lemma~\ref{lem:bfs} and stop it once $n^{d/2}$
  nodes of $G_P(q)$ are explored. This requires $O(n^{d/2+1})$ time. If the
  graph is not yet fully explored, we know $\sigma_P(q) \geq n^{d/2}$. We can
  now apply the algorithm from Lemma~\ref{lem:sampling} and compute a
  $(1+\eps)$-approximation in $\tilde{O}(n^{d/2 + 1})$ time with
  error probability $O(n^{-\delta})$.
\end{proof}

\section{Improved Approximation in Three Dimensions}\label{sec:3d}
In this section, we show that the simplicial depth in 3D can be approximated in 
$\tilde{O}(n)$ time, which is clearly optimal up to polylogarithmic factors.
The main ingredients required for our proof are the two-dimensional data structure
from Section~\ref{sec:ldim}, Lemma~\ref{lem:SH}, and Observation~\ref{ob:ray}.

\begin{theorem}
    Let $P$ be a set of $n$ points in 3D. The simplicial depth of a given point 
    $q$ can be approximate in $\tilde{O}(n)$ expected time and with high probability.
\end{theorem}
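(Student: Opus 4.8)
The plan is to push the problem from $q$ onto the unit sphere, then split it by a generic hyperplane into pieces that are either answered directly by the two-dimensional data structure of Theorem~\ref{thm:2d} or reduced to a single auxiliary planar counting problem.

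\textbf{Reduction to the sphere.} First I would translate $q$ to the origin $\ori$ and apply the central projection $\pi_q(p)=(p-q)/\|p-q\|$, which maps $P$ bijectively onto a set $P^*\subset S^2$. If $q=\sum\beta_i p_i$ with $\beta_i\ge 0$ and $\sum\beta_i=1$, then $\sum(\beta_i\|p_i-q\|)\,\pi_q(p_i)=\ori$, and conversely any nonnegative dependence among the $\pi_q(p_i)$ rescales to such a representation of $q$; hence a $4$-subset of $P$ contains $q$ in its convex hull iff the corresponding $4$-subset of $P^*$ contains $\ori$, so $\sigma_P(q)=\sigma_{P^*}(\ori)$. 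This is where Observation~\ref{ob:ray} enters: it is precisely the swapping/ray formulation of this dependence, normalised to the query being the origin.

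\textbf{Cut by a hyperplane and reduce to $2$D queries.} Now pick a generic hyperplane through $\ori$ (a generic rotation keeps everything in general position) and split $P^*=U\cup L$ into the points above and below it. A $4$-subset contained in an open halfspace cannot contain $\ori$, so every $\ori$-containing $4$-subset has exactly $1$, $2$ or $3$ points in $U$. Apply gnomonic projection (which sends simplicial cones, via the slice at height one, to simplices) to $U$ and to the antipodes $-L$, all of which lie in one open halfspace, obtaining planar point sets $A$ and $B$ with $|A|+|B|=n$; this turns ``$-w\in\mathrm{cone}\{w_1,w_2,w_3\}$'' into ``the image of $-w$ lies in the planar triangle of the images of $w_1,w_2,w_3$''. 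A short case check then gives
\[
  \sigma_P(q)=\sum_{b\in B}\sigma_A(b)+\sum_{a\in A}\sigma_B(a)+X,
\]
where the first sum counts the $4$-subsets with three points in $U$, the second those with one point in $U$, and $X$ counts those with two points in each part --- which, after the projection, are exactly the pairs consisting of a segment spanned by two points of $A$ and a segment spanned by two points of $B$ that cross in the plane. Building the structure of Theorem~\ref{thm:2d} on $A$ and on $B$ and issuing one relative-$\eps$ query per point of $B$ (resp.\ of $A$) evaluates the first two sums in $\tO(n)$ time; since all terms are nonnegative the relative errors only add, and a union bound keeps the failure probability polynomially small.

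\textbf{The hard part.} The main obstacle is the term $X$: approximating, in near-linear time, the number of crossing bichromatic segment pairs, equivalently the number of $4$-subsets with two points of each colour in convex position with alternating colours. I would express $X$ through colour-restricted triangle-containment counts (such as ``triangles with one $A$-point and two $B$-points containing a given $A$-point''), recover these from the two-dimensional structure run on $A$, on $B$, and on a constant number of fixed random $1/2$-samples of $B$ inside $A\cup B$ --- extracting the ``$i$ points from $A$'' contributions by interpolation, with the martingale concentration of Lemma~\ref{lem:RS} controlling the samples --- and then invoke Lemma~\ref{lem:SH}, which gives $\sigma_P(q)=\Omega(n\,\tau_P^3(q))$, to argue that every quantity discarded along the way is $O(\eps\,\sigma_P(q))$, so that the final estimate is relative-$\eps$. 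The regime where $\tau_P(q)$ is too small for these bounds to bite (in particular $\tau_P(q)=0$, where $\sigma_P(q)=0$) is handled separately. Since the preprocessing of Theorem~\ref{thm:2d} costs $\tO(n)$ expected time, the overall running time is $\tO(n)$ in expectation and the answer is correct with high probability.
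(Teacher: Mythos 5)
Your spherical/gnomonic reduction and the identity $\sigma_P(q)=\sum_{b\in B}\sigma_A(b)+\sum_{a\in A}\sigma_B(a)+X$ are correct, and the first two sums are indeed answerable with $\tilde{O}(n)$ queries to the 2D structure. But the proposal has a genuine gap at the step you yourself flag as ``the hard part'': the $2{:}2$ term $X$. With a \emph{generic} hyperplane through $q$ (splitting $P$ roughly in half), $X$ can be a constant fraction of $\sigma_P(q)$, so it must be approximated with \emph{relative} error $\eps$. Your sketch --- express $X$ via colour-restricted triangle-containment counts and recover those by interpolation from $1/2$-samples --- does not obviously close this gap. First, $X$ (bichromatic crossing pairs, i.e.\ $2{:}2$ subsets in convex position with alternating colours) is not a sum of triangle-containment counts; any inclusion--exclusion route goes through $\binom{|A|}{2}\binom{|B|}{2}$, which is $\Theta(n^4)$ regardless of $\sigma_P(q)$, so expressing $X$ as a difference of such quantities destroys relative precision whenever $X\ll n^4$. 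Second, Theorem~\ref{thm:2d} answers plain simplicial-depth queries, not colour-restricted ones, and the proposed sampling/interpolation to extract ``$i$ points from $A$'' coefficients amplifies the per-query $\eps$-error by a factor that is not controlled by Lemma~\ref{lem:RS} (which bounds $|\E[\sigma_S]-\sigma_S|$, not the error of a coefficient extracted from several such estimates).

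The idea the proposal misses is precisely what makes the theorem go through in the paper: do \emph{not} pick a generic hyperplane, but a \emph{shallow} one. The paper first finds (via Aronov--Har-Peled-style repeated sampling and LP) a halfspace $h$ through $q$ whose small side contains $\Theta(\tau_P(q))$ points, with high probability. Projecting through $q$ onto two planes parallel to $h$ then yields $P_1$ with $|P_1|=\tilde O(\tau_P(q))$ and $P_2$ with the rest. Every tetrahedron containing $q$ must touch $P_1$; the contribution of tetrahedra with exactly one $P_1$-vertex is handled by $|P_1|$ 2D queries as in your proposal, and the remaining contribution $\sigma^{(2+)}$ uses at least two vertices from the \emph{small} side, so it is bounded by $|P_1|^2 n^2 = \tilde O(\tau_P^2(q) n^2)$. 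Combined with $\sigma_P(q)=\Omega(n\,\tau_P^3(q))$ from Lemma~\ref{lem:SH}, this makes $\sigma^{(2+)}$ estimable to additive error $\eps\,\sigma_P(q)$ by directly sampling $\tilde O(n)$ restricted simplices, as in Lemma~\ref{lem:sampling}. The shallow cut is essential: it shrinks the troublesome split-type to a population small enough to sample, sidestepping the need for any relative approximation of a quantity like $X$.
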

\begin{proof}
    First, we find a halfspace $h$ that passes through $q$ and contains a subset
    $A \subset P$ of $\Theta(\tau_P(q))$ points on one side of it.
    We can find $h$ using different methods, e.g., by using the general reduction used by Aronov and 
	Har-Peled~\cite{AronovHa2008} which can be summarized as follows: 
    take $2^{-i}$-random samples $S_i$ of $P$, for $i=1, \cdots, \log n$ until we find the
    first index $l$ such that $q$ lies outside the convex hull of $S_l$.
    Repeat this $O(\log n)$ times and let $j$ be the smallest index found during these repetitions.
    Let $h$ be the hyperplane that separates $q$ from $S_j$.
    Testing whether $q$ lies outside the convex hull of $S_j$ and finding $h$ if it does, 
    is a three-dimensional linear programming step and can be done in $O(|S_j|)$ expected time
    so the overall computation time is $\tilde{O}(n)$.
    Let $h^+$ be the side of $h$ that contains $q$.
    A standard application of Chernoff bound shows that with high probability, $h^+$ 
    contains at least $\Omega(\tau_P(q)/\log n)$ points and at most
    $\Omega(\tau_P(q)\log n)$ points of $P$.

    Similar to the technique used in Section~\ref{sec:exact}, consider two parallel
    hyperplanes $h_1$ and $h_2$, but this time parallel to $h$.
    Do a central projection from $q$ and map the points of $P$ onto $h_1$ and $h_2$,
    resulting in point sets $P_1$ and $P_2$ on $h_1$ and $h_2$, respectively.
    However, we can observe that since $h_1$ and $h_2$ are parallel to $h$,
    one of them will contain $\tilde{O}(\tau_P(q))$ points. 
    Let this be $P_1$.

    Any simplex containing $q$ must have at least one point from $P_1$.
	Thus, we can express the simplicial depth of $q$ as the sum of the number of 
	simplices containing exactly one point from $P_1$ (denoted by $\sigma^{(1)}(q)$)
	and the number of simplices containing two or more points from $P_1$ (denoted by $\sigma^{(2+)}(q)$).
    We approximate each term separately.

    To approximate $\sigma^{(1)}(q)$, we build the data structure of
    Theorem~\ref{thm:2d} on $P_2$ in $\tO(n)$ time.  For any point $p \in P_1$, consider the ray
    $\ray{pq}$  and its intersection $p'$ with $h_2$.  The two-dimensional
    simplicial depth $\sigma_{P_2}(p')$ approximates the number of triangles
    that contain $p'$ which is equal to the number of three-dimensional
    simplices that contain $q$ and have only $p$ from $P_1$.  Thus, by issuing
    $|P_1|$ queries to the two-dimensional data structure, we can approximate
    $\sigma^{(1)}(q)$ in $\tilde{O}(|P_1|) = \tO(n)$ time.

    To approximate $\sigma^{(2+)}_P(q)$, notice that 
    $\sigma^{(2+)}_P(q) \le |P_1|^2 n^{2}$ since we are forced to pick at least two points
    from $P_1$.
    On the other hand, by the Lemma~\ref{lem:SH},
	$\sigma_P(q) = \Omega(\tau^3_P(q) n)$.
    Thus, we can use the same approach as in Subsection~\ref{subsec:sample} and
    directly sample simplices.
    However, we sample simplices that have at least two points from $P_2$.
	Using the above inequalities and similar to the analysis used in Lemma~\ref{lem:sampling},
	it suffices to sample 
    $O( \frac{\delta \log n |P_1|^2 n^{2}}{\varepsilon^2\tau^3_P(q) n }) = \tO(n)$ simplices and to obtain
    a relative $(1+\varepsilon)$-approximation with high probability for  $\sigma^{(2+)}_P(q)$.
\end{proof}

\section{An Exact Algorithm in High Dimensions}\label{sec:exact}
In this section we describe a simple strategy to compute the simplicial depth exactly
in $O(n^d \log n)$ time. 
While we do not achieve the conjectured lower bound of $\Omega(n^{d-1})$, we cut down roughly
a factor $n$ compared to the trivial upper bound of $O(n^{d+1})$.
Note that this almost matches the best previous bound of $O(n^4)$ in 4D as well~\cite{ChengOu2001}.

W.l.o.g, assume  $q$ is the origin, $\ori$. 
Our main idea is very simple: consider $d$ points $p_1, \dots, p_d \in P$.
Let $\ray{r_i}$ be the ray that originates from $\ori$ towards $-p_i$.
We would like to count how many points $p \in P$ can create a simplex with $p_1, \dots, p_d$ 
that contains $\ori$.
We observe that this is equivalent to counting the number of points
of $P$ that lie inside the simplex created by rays $\ray{r_1}, \dots, \ray{r_d}$.
We can count this number in polylogarithmic time if we spend $\tilde{O}(n^d)$ time to build
a simplex range counting data structure on $P$.
This would give an algorithm with overall running time of  $\tilde{O}(n^d)$.
We can cut the log factors down to one by employing a slightly more intelligent approach.

We use the following observation made by Gil et al.~\cite{GilStWi1992}.
\begin{ob}\label{ob:ray}
	Let $q$ be a point inside a simplex $a_1\dots a_{d+1}$ and let $a'_i$ be a point
	on the ray $\stackrel \rightarrow{qa_i}$.
	The simplex defined by $a_1\dots a_{i-1}a'_ia_{i+1}\dots a_{d+1}$ contains
	$p$.
\end{ob}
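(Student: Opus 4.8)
The plan is to give a short geometric argument showing that $q$ lies in $\conv(\{a'_i\}\cup\{a_j : j\neq i\})$, in the spirit of the line-exits-a-facet reasoning used in Lemma~\ref{lem:change}. Write $q$ in barycentric coordinates with respect to the simplex $\conv(a_1,\dots,a_{d+1})$: $q = \sum_{j=1}^{d+1}\lambda_j a_j$ with $\sum_j\lambda_j = 1$. Since $q$ lies in the interior of the simplex, all $\lambda_j$ are strictly positive. Let $H_i$ be the hyperplane spanned by $\{a_j : j\neq i\}$ — the affine hull of the facet opposite $a_i$ — and let $w$ be the point where the line $\overline{a_i q}$ meets $H_i$.

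The first step is to pin down $w$. Parametrising $\overline{a_i q}$ as $(1-s)q + s a_i$, the $i$-th barycentric coordinate of the point at parameter $s$ equals $\lambda_i + s(1-\lambda_i)$, which vanishes precisely at $s_0 = -\lambda_i/(1-\lambda_i)$; since $0 < \lambda_i < 1$ we have $s_0 < 0$, and at that parameter the remaining coordinates are $\lambda_j/(1-\lambda_i) > 0$ for $j\neq i$. Hence $w$ lies in the relative interior of the facet $\conv(\{a_j : j\neq i\})$, and on the line $\overline{a_i q}$ the three points $w$, $q$, $a_i$ occur at parameters $s_0 < 0$, $0$, $1$; in particular $q$ lies between $w$ and $a_i$.

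Now $a'_i$ is a point of the ray $\overrightarrow{q a_i}$, i.e.\ it sits at some parameter $s\ge 0$ on this same line, so $a'_i$ and $w$ lie on opposite sides of $q$ and $q$ lies on the segment $\overline{a'_i w}$. That segment joins the vertex $a'_i$ of $\conv(\{a'_i\}\cup\{a_j : j\neq i\})$ to the point $w$ of its opposite face $\conv(\{a_j : j\neq i\})$, so $\overline{a'_i w} \subseteq \conv(\{a'_i\}\cup\{a_j : j\neq i\})$, and therefore $q$ belongs to this simplex, as claimed. For completeness one checks this is a genuine $d$-simplex: $\overline{a_i q}$ meets $H_i$ only at $w$ and $a'_i \neq w$, so $a'_i\notin H_i$, which together with the affine independence of the $a_j$, $j\neq i$, gives affine independence of the new vertex set.

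I do not foresee any real difficulty; the one point that needs a moment's care is the betweenness assertion — that $w$ and $a'_i$ flank $q$ — and this is exactly what the barycentric computation of the second paragraph supplies. (Should one wish to bypass the geometry altogether, writing $a'_i = q + t(a_i - q)$ with $t > 0$ gives $a_i = (1-\tfrac1t)q + \tfrac1t a'_i$; substituting into $q = \sum_j\lambda_j a_j$ and solving for $q$ yields $q = \tfrac1\mu\big(\sum_{j\neq i}\lambda_j a_j + \tfrac{\lambda_i}{t}a'_i\big)$ with $\mu = 1-\lambda_i+\tfrac{\lambda_i}{t} > 0$, a convex combination with all coefficients positive, which is the claim.)
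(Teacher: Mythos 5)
Your proof is correct. Note that the paper itself supplies no argument for this observation --- it is cited verbatim from Gil, Steiger and Wigderson --- so there is no ``paper proof'' to compare against; what you have written is a self-contained verification of a claim the paper takes on faith.

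Both routes you sketch are fine, and the closing one-line algebraic argument is actually the cleanest: writing $a'_i = q + t(a_i - q)$ with $t>0$, solving for $a_i$, and substituting into $q=\sum_j\lambda_j a_j$ directly exhibits $q$ as a convex combination of $a'_i$ and the $a_j$ ($j\ne i$) with all coefficients nonnegative. Two small tidying remarks. First, the observation as stated allows $a'_i=q$ (parameter $t=0$ on the ray), where your formula $a_i=(1-\tfrac1t)q+\tfrac1t a'_i$ is undefined; that case is trivial since $q$ is then a vertex of the new simplex, but you should say so rather than silently assume $t>0$. Second, the observation says ``inside'' the simplex; your barycentric argument asserts strict positivity of all $\lambda_j$, which is the interior case. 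If ``inside'' is read as $q\in\conv(a_1,\dots,a_{d+1})$ then some $\lambda_j$ may vanish, but the convex-combination computation goes through verbatim with ``$>0$'' replaced by ``$\ge 0$'' (and the degenerate $\lambda_i=0$ subcase handled directly), so the statement holds in that generality too. Neither point affects the substance; under the paper's standing general-position assumption the interior case is the only one that arises.
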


Pick two arbitrary parallel hyperplanes $h_1$ and $h_2$ such that $P$ lies between them.
This can be done easily in $O(n)$ time.
Next, using central projection from $\ori$, we map the points onto the hyperplanes $h_1$ and
$h_2$:
for every point $p_i \in P$, we create the ray $\ray{\ori p_i}$
and let $p'_i$ be the intersection of the ray with $h_1$ or $h_2$.
Thus, the point set $P$ can be mapped to two point sets $P_1$ and $P_2$ where $P_1$ lies on
$h_1$ and $p_2$ lies on $P_2$ and furthermore, by Observation~\ref{ob:ray},
$\sigma_P(q) = \sigma_{P_1 \cup P_2}(q)$.

Now we use the following result from the simplex range counting literature.
\begin{theorem}\cite{Chazelle.et.al.simplex.RS}
    Given a set of $n$ points in $d$-dimensional space, and any constant 
    $\varepsilon > 0$, one can build a data structure of size 
    $O(n^{d+\varepsilon})$ in $O(n^{d+\varepsilon})$ expected preprocessing
    time, such that given any query simplex $\Delta$, the number of points in 
    $\Delta$ can be counted in $O(\log n)$ time.
\end{theorem}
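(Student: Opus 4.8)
The plan is to reduce simplex range counting to a constant number of nested halfspace range counting subproblems and to solve each of those, in the dual, by point location in a hyperplane arrangement built from hierarchical cuttings. A query simplex $\Delta\subset\R^d$ is the intersection of $d+1$ halfspaces $h_1,\dots,h_{d+1}$, so I would build a $(d+1)$-level structure: the outermost level resolves membership in $h_1$, and at each of its canonical nodes I recursively hang a structure resolving the intersection $h_2\cap\dots\cap h_{d+1}$ on the associated subset of $P$. Since $d=O(1)$ the nesting has constant depth, so if one level costs $O(\log n)$ time and blows up the size by only a polylogarithmic factor, the final bounds are $O(\log n)$ query and $O(n^{d+\varepsilon})$ space and preprocessing, with the slack $n^{\varepsilon}$ absorbing all polylogarithmic and constant-factor losses. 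The geometric content is in the single-level primitive.

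For a single halfspace I would dualize: let $\dual P=\{\dual{p_1},\dots,\dual{p_n}\}$ be the hyperplanes dual to $P$ and map the query halfspace to a point $\xi$; then $|P\cap h|$ equals the level of $\xi$ (the number of hyperplanes of $\dual P$ passing below it) in the arrangement $\mathcal{A}(\dual P)$. I would build Chazelle's hierarchical cutting of $\dual P$: a $(1/r)$-cutting with $r$ a large constant cuts $\R^d$ into $O(r^d)$ cells of bounded complexity, each crossed by at most $n/r$ hyperplanes and each storing the number of hyperplanes strictly below it; inside every cell one recurses on its at most $n/r$ crossing hyperplanes, producing a tree of depth $O(\log_r n)=O(\log n)$. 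Locating $\xi$ descends this tree in $O(1)$ per level, and $\mathrm{level}(\xi)$ is recovered as the sum, along the root-to-leaf path, of the stored ``below'' counts. The crucial point is that these partial counts run over pairwise disjoint subsets of $\dual P$ --- a hyperplane is counted exactly at the node where it stops being a crossing hyperplane --- so they constitute a canonical decomposition of the answer into $O(\log n)$ disjoint pieces of aggregate size at most $n$. For the multilevel version I replace each stored count at a node by a recursively built secondary structure on the corresponding canonical subset, handling the remaining bounding halfspaces.

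The size and preprocessing bounds then follow from the standard analysis of hierarchical cuttings: one level has $O(r^d)$ cells at the top and, summing cell counts over all $O(\log n)$ levels of the recursion, total size $O(n^d)$ up to lower-order factors, with the $(1/r)$-cuttings themselves constructed by Chazelle--Friedman random sampling in expected time matching the size; feeding the parameter $r$ suitably through the hierarchy in the Chazelle/Matou\v{s}ek style collapses the lower-order factors into $n^{\varepsilon}$. Each of the $O(1)$ levels of nesting multiplies the total size by only a $\log^{O(1)} n$ factor, because the canonical subsets along any search path are disjoint (hence have aggregate size at most $n$) and, summed over the whole tree, inflate the size by at most a polylogarithmic factor --- again swallowed by $n^{\varepsilon}$. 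The query descends one root-to-leaf path per level in $O(\log n)$ time, and to keep the total query time $O(\log n)$ rather than $O(\log^{O(1)} n)$ across the $d+1$ levels I would share information between the secondary structures met along a path in a fractional-cascading style, so that the searches at deeper levels are continuations rather than fresh binary searches.

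The main obstacle is exactly this bookkeeping. Naively composing $d+1$ point-location-in-arrangement structures would cost space like $n^{\Theta(d^2)}$, so the proof must exploit both that each level's answer decomposes into only $O(\log n)$ canonical pieces and that those pieces are disjoint along every search path with near-linear aggregate size; verifying this for hierarchical cuttings, and choosing cutting parameters so that every polylogarithmic and constant loss accumulated over the nesting disappears into a single $n^{\varepsilon}$, is the delicate part. Equally delicate is shaving the query time down to $O(\log n)$ through the $(d+1)$ levels, for which the fractional-cascading-style coupling between levels --- or, alternatively, the zone-theorem/linearization machinery underlying the Chazelle--Sharir--Welzl cutting tree --- is essential. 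Everything else (the primal--dual transform, the existence of $(1/r)$-cuttings, and the Chazelle--Friedman sampling construction) is standard $\varepsilon$-cutting technology used as a black box.
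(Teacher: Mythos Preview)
The paper does not prove this statement; it is quoted with attribution to Chazelle, Sharir, and Welzl and used as a black box in the exact-algorithm section. There is therefore no proof in the paper to compare your proposal against. For what it is worth, your sketch is a reasonable high-level outline of the construction in the cited reference: dualize, handle one bounding halfspace of the query simplex per level via point location in a hierarchical cutting, and nest $d+1=O(1)$ such levels, letting the $n^{\varepsilon}$ slack absorb the constant-per-level polylogarithmic blowup in space and preprocessing. The points you flag as delicate---keeping the nested space from exploding and keeping the query time at $O(\log n)$ rather than $O(\log^{d+1} n)$---are indeed where the work in the original paper lies.
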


We build the above data structure on $P_1$ and $P_2$. However, since
both of these point sets lie on a $(d-1)$-dimensional flat, the preprocessing time
is $O(n^{d-1+\varepsilon}) = O(n^d)$ if we choose $\varepsilon = 1/2$.
Next, for any $d$ tuples of points $p_1, \dots, p_d$, we create the rays
$\ray{r_1}, \dots, \ray{r_d}$ and the corresponding simplex $\Delta$.
We find the intersection of $\Delta$ in $O(1)$ time with hyperplanes 
$h_1$ and $h_2$ and issue two simplex range counting queries, one in each hyperplane.
Thus, in $O(\log n)$ time, we can count how many simplices contain $\ori$ that are
made by points $p_1, \dots, p_d$. 
We add all these numbers over all $d$ tuples, which counts each simplex containing $\ori$
exactly $(d+1)$ times. 
The number of $d$-tuples is $O(n^d)$ and for each we spend $O(\log n)$ time querying
the data structures.
Thus,  we obtain the following theorem.
\begin{theorem}
    Given a set $P$ of $n$ points in $\R^d$, the simplicial depth of a point $p$ 
    can be computed in $O(n^d \log n)$ expected time. 
\end{theorem}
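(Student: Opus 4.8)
The plan is to reduce counting the origin-containing $(d+1)$-subsets of $P$ to $O(n^d)$ simplex range counting queries, each answered in $O(\log n)$ time. After translating so that $q = \ori$, the key structural fact is Observation~\ref{ob:ray}: for $d$ fixed points $p_1,\dots,p_d \in P$, a further point $p \in P$ satisfies $\ori \in \conv\{p_1,\dots,p_d,p\}$ exactly when $p$ lies in the simplicial cone with apex $\ori$ spanned by the rays $\ray{r_1},\dots,\ray{r_d}$, where $\ray{r_i}$ runs from $\ori$ towards $-p_i$. (This is just the barycentric identity $\ori = \sum \lambda_i p_i + \mu p$ rewritten as $p = -\sum (\lambda_i/\mu) p_i$; general position forces $\mu \neq 0$, since otherwise $q=\ori$ would lie on the affine hull of $d$ points of $P$.) Hence $(d+1)\,\sigma_P(\ori)$ equals the sum, over all $d$-element subsets $\{p_1,\dots,p_d\} \subseteq P$, of the number of points of $P$ inside the corresponding cone, and dividing the total by $d+1$ yields $\sigma_P(\ori)$.

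Since each such cone is unbounded, I would not perform a range count directly in $\R^d$. Instead, fix two parallel hyperplanes $h_1, h_2$ with $P$ strictly between them, and centrally project $P$ from $\ori$ onto $h_1 \cup h_2$, obtaining $P_1 \subset h_1$ and $P_2 \subset h_2$. Because $P$ lies strictly between the hyperplanes, each ray $\ray{\ori p}$ meets exactly one of them, so every point of $P$ has a well-defined image; and, again by Observation~\ref{ob:ray}, $p$ lies in the cone iff its image lies in the intersection of the cone with $h_1$ or with $h_2$, which is a bounded $(d-1)$-dimensional simplex. Thus for each $d$-subset I compute the two simplices $\Delta \cap h_1$ and $\Delta \cap h_2$ in $O(1)$ time and issue one simplex range counting query to each hyperplane's data structure.

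For the data structure I would invoke the simplex range counting result of Chazelle et al.~\cite{Chazelle.et.al.simplex.RS}, which answers a query in $O(\log n)$ time after $O(n^{d+\eps})$ expected preprocessing. Run in ambient dimension $d$ this preprocessing would dominate, but $P_1$ and $P_2$ lie on $(d-1)$-flats, so the effective dimension is $d-1$; taking $\eps = 1/2$ makes preprocessing $O(n^{d-1/2}) = O(n^d)$. Summing the $2\binom{n}{d} = O(n^d)$ query answers, each origin-containing simplex is counted exactly $d+1$ times (once per choice of which of its $d+1$ vertices plays the role of $p$), so dividing by $d+1$ gives $\sigma_P(q)$; the total is $O(n^d \log n)$ expected time, preprocessing and the $O(n^d)$ queries each staying within this bound.

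The step I expect to require the most care is justifying that central projection faithfully converts the unbounded ``containment in the cone'' condition into ``containment in a bounded simplex on one of two hyperplanes'' for every point of $P$ simultaneously: one must verify that $P$ strictly between $h_1$ and $h_2$ makes each ray $\ray{\ori p}$ hit exactly one hyperplane, and that the image of the cone in that hyperplane is precisely the simplex cut out by the cone's bounding rays, so that the partition of $P$ into $P_1, P_2$ loses no incidences and creates no spurious ones. Once this correspondence is pinned down, the rest is the routine bookkeeping above plus a black-box call to simplex range searching.
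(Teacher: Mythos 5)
Your proposal is correct and follows the paper's own proof essentially step for step: reduce to counting points in a simplicial cone for each $d$-tuple, centrally project $P$ onto two parallel hyperplanes to obtain point sets $P_1, P_2$ living in $(d-1)$-flats, build the Chazelle--Sharir--Welzl simplex range counting structures there with $\eps = 1/2$ so preprocessing is $O(n^d)$, issue $O(n^d)$ queries in $O(\log n)$ each, and divide the running total by $d+1$. The one small imprecision you flag yourself — that the cone's slice by $h_1$ or $h_2$ is a \emph{bounded} $(d-1)$-simplex — does not actually hold when the cone's generators straddle both sides of $\ori$, but the slice is still the intersection of $d$ halfspaces in a $(d-1)$-flat, which is exactly what simplex range counting supports (the paper glosses over this in the same way), so the algorithm and its analysis go through unchanged.
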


\section{Complexity}
\label{sec:complexity}
Let $P \subset \R^d$ be a set of $n$ points and $q \in \R^d$ a query point.
If the dimension is constant, then clearly computing $\sigma_P(q)$ can be
carried out in polynomial time. We now consider the case that $d$ is part of the
input. We show that in this case computing the simplicial depth is \#P-complete
by a reduction from counting the number of perfect matchings in bipartite graphs.

\begin{proposition}\label{prop:sph}
  Let $P \subset \R^d$ be a set and $q \in \R^d$ a query point. Then, computing
  $\sigma_P(q)$ is \#P-complete if the dimension is part of the input.
\end{proposition}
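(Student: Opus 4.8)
The plan is to reduce from the problem of computing the permanent of a 0/1 matrix, equivalently counting perfect matchings in a bipartite graph, which is the canonical \#P-complete problem. Given a bipartite graph $G$ with parts $U=\{u_1,\dots,u_n\}$ and $W=\{w_1,\dots,w_n\}$ and edge set $E$, I want to build in polynomial time a point set $P$ in some dimension $d$ (polynomial in $n$) together with a query point $q$ so that $\sigma_P(q)$ encodes the number of perfect matchings of $G$, perhaps after subtracting an easily computable quantity or dividing by a known factor.

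The natural vehicle is the Gale transform / oriented matroid correspondence already used in Lemma~\ref{lem:gale}: a $(d+1)$-subset $P'\subseteq P$ contains $q=\ori$ in its convex hull iff the complementary set of Gale vectors is a facet of the dual configuration, i.e.\ iff a certain sign condition holds. So instead of constructing $P$ directly, I would construct the Gale diagram $\bar P\subset\R^{n-d-1}$ (or directly a vector configuration) for which facets — equivalently, cocircuits with a prescribed sign pattern — are in bijection with perfect matchings. Concretely, I would take the dimension of the Gale space to be related to $2n$ and place vectors $\bar p_e$ indexed by edges $e\in E$ plus some ``gadget'' vectors, arranged so that a minimal dependent-complement (a facet of the convex hull) picks out exactly one edge at each vertex $u_i$ and each vertex $w_j$, forcing the selected edges to form a perfect matching. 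A clean way to force the ``exactly one per vertex'' structure is to use a block/product construction: for each vertex take a simplex-like gadget whose facets correspond to choosing one incident edge, and glue these gadgets so that a global facet is the ``product'' of local choices; the consistency constraint that an edge is chosen on its $U$-side iff on its $W$-side is what restricts the count to matchings rather than arbitrary edge selections. Membership in \#P is immediate: given a $(d+1)$-subset, checking $q\in\conv(P')$ is a single linear-programming feasibility test, polynomial in the input size, so $\sigma_P(q)$ is a \#P function; the work is the hardness direction.

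In carrying this out I would (i) fix the gadget: find a small vector configuration whose cocircuits with sign pattern $(+,\dots,+)$ on a designated coordinate block are in bijection with ``select one element of a given ground set''; (ii) take the direct sum of these gadgets over all $2n$ vertices, check that cocircuits of a direct sum are products of cocircuits of the summands (this is a standard oriented-matroid fact), so that a global positive cocircuit corresponds to an arbitrary choice of one incident edge at every vertex; (iii) add identification constraints so that only globally consistent choices — i.e.\ perfect matchings — survive, which can be done either by further gadget vectors that kill inconsistent selections or by observing that inconsistent selections simply are not realizable as facets once edges are shared between the $U$-gadget and the $W$-gadget; (iv) translate back through the Gale transform to obtain the primal point set $P$ of dimension $d=|\text{ground set}| - (n-d-1)$-type bookkeeping, verifying $d$ is polynomial in $n$; and (v) conclude $\sigma_P(\ori)=c\cdot(\#\text{perfect matchings of }G)$ for an explicit constant $c$ (e.g.\ $c=1$ if the bijection is exact, or a binomial coefficient if the non-edge vertices are padded), hence computing $\sigma_P$ solves \#Bipartite-Perfect-Matching.

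The main obstacle I anticipate is exactly step (iii): making the geometry enforce the matching constraint (an edge used at $u_i$ must be the same edge used at the corresponding $w_j$) rather than allowing independent per-vertex edge choices, which would count something like $\prod_i \deg(u_i)$ instead of the permanent. Getting the gadget to share a single vector $\bar p_e$ between the two endpoint gadgets while still having the direct-sum cocircuit structure behave correctly is delicate; an alternative, if the pure direct-sum approach fails, is to encode matchings via a different well-known \#P-complete problem with more ``linear'' structure, or to insert ``penalty'' vectors that are forced into any facet unless the selection is consistent, thereby zeroing out the non-matching contributions. I would also need to double-check general position / degeneracy issues, since the reduction naturally produces highly degenerate configurations, but since the statement concerns exact computation (not the approximation setting where general position was assumed) a symbolic perturbation argument, or simply working with the combinatorial Gale/oriented-matroid description throughout and only realizing it at the end, should suffice.
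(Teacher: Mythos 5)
Your starting point is correct: you reduce from counting perfect matchings, you identify the Gale transform (Lemma~\ref{lem:gale}) as the bridge between simplices containing the origin and facets of a polytope, and your argument for \#P membership (one LP feasibility test per subset) is fine. But the reduction itself is not completed, and the difficulty you flag in step (iii) — forcing the per-vertex edge choices to glue into a global matching — is exactly the hole that would need to be filled; as written, the direct-sum gadget construction would count something closer to $\prod_v \deg(v)$, and you offer only speculative repairs. So this is a program for a proof, not a proof.

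The idea you are missing is that you do not need to build a vector configuration whose facets encode matchings: such a polytope already exists off the shelf. The \emph{perfect matching polytope} $\mathcal{P}_H\subset\R^m$ of the bipartite graph $G$ (with $m=|E|$, $n=|V|$) is cut out by $m+2n$ halfspaces, and by Birkhoff--von Neumann its vertices are exactly the perfect matchings of $G$; so the number $k$ of perfect matchings is its vertex count. Pass to the polar dual $\mathcal{P}_V$, which is the convex hull of $m+2n$ points $P\subset\R^m$ and now has exactly $k$ facets. Apply the Gale transform to $P$ to obtain $\bar P\subset\R^{2n-1}$; by Lemma~\ref{lem:gale} (used via the involutive nature of Gale duality) the facets of $\mathcal{P}_V$ are in bijection with the $(2n-1)$-simplices on $\bar P$ that contain the origin, i.e.\ $\sigma_{\bar P}(\ori)=k$. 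This achieves in one step what your gadget-and-glue approach was trying to do by hand, and it eliminates the consistency issue entirely because the matching constraint is built into the defining inequalities of $\mathcal{P}_H$. You should also note that the dimension $2n-1$ is polynomial, so the reduction is polynomial-time, and that the dual polytope can be written down explicitly from the halfspace description of $\mathcal{P}_H$, so no oriented-matroid realizability step is needed.
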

\begin{proof}
  Let $G=(V,E)$ be a bipartite graph with $|V|=n$ and $|E|=m$. It is well known
  that computing the number of perfect matchings in $G$ is
  \#P-complete~\cite{Valiant1979}. Let $\mc{P}_H \subset \R^m$ be the perfect matching
  polytope for $G$~\cite[Chapter~30]{GrahamGrLo1995}. It is defined by $m + 2n$
  half-spaces. Furthermore, the number of vertices of
  $\mc{P}_H$ equals the number $k$ of perfect matchings in $G$. Consider now the
  dual polytope $\mc{P}_V \subset \R^m$. It is the convex hull of $m+2n$ points
  $P \subset \R^m$ and the number of facets equals $k$. Let $\bar{P}
  \subset \R^{2n-1}$ be the Gale transform of $P$. By Lemma~\ref{lem:gale},
  there is a bijection between the facets of $\mc{P}_V$ and the
  $(2n-1)$-simplices with vertices in $\bar{P}$ that contain $\ori$ in their
  convex hull. Hence, $\sigma_{\bar{P}}(\ori) = k$.
\end{proof}

Next, we show that computing the simplicial depth is W[1]-hard with respect to
the parameter $d$ by a reduction to \emph{$d$-\Caratheodory}. In
$d$-\Caratheodory, we are given a set $P \subset \R^d$ and have to decide
whether there is a $(d-1)$-simplex with vertices in $P$ that contains $\ori$ in its
convex hull. Knauer et al.~\cite{KnauerTiWe2011} proved that this problem is
W[1]-hard with respect to the parameter $d$.

\begin{proposition}\label{prop:w1h}
Let $P \subset \R^d$ be a set and $q \in \R^d$ a query point. Then, computing
$\sigma_P(q)$ is W[1]-hard with respect to the parameter $d$.
\end{proposition}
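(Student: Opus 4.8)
The plan is to give a polynomial-time Turing reduction from $d$-\Caratheodory, which Knauer et al.\ showed to be W[1]-hard with respect to $d$. Let $P\subset\R^d$ together with the origin $\ori$ be the given instance, so we must decide whether some $d$ affinely independent points of $P$ have $\ori$ in their convex hull. I would first dispose of degeneracies: if $\operatorname{aff}(P)$ has dimension at most $d-2$ then $P$ has no $d$ affinely independent points and the answer is \textsc{no}, so assume $\dim\operatorname{aff}(P)\in\{d-1,d\}$. Under this assumption the instance is a \textsc{yes}-instance if and only if the number $N$ of $d$-element subsets $S\subseteq P$ with $\ori\in\conv(S)$ is positive: an affinely independent witness is such an $S$, and conversely, given any $S$ with $\ori\in\conv(S)$, one can apply \Caratheodory inside $\operatorname{aff}(S)$ to obtain an affinely independent $S'\subseteq S$ still containing $\ori$, and then repeatedly adjoin a point of $P$ outside the current affine hull until $d$ points are reached, which preserves affine independence and can only enlarge the convex hull. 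So it suffices to extract $N$ from simplicial-depth computations.

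For that I would use the following one-dimension-higher gadget. Place $\tilde P=\{(p,0):p\in P\}$ in the hyperplane $H=\{x_{d+1}=0\}$ of $\R^{d+1}$, add two auxiliary points $a=(\ori,1)$ and $b=(\ori,2)$ (both with positive last coordinate), and set $P'=\tilde P\cup\{a,b\}$ and $q'=(\ori,0)\in H$. Since $a$ and $b$ lie strictly on one side of $H$, a short computation with the last coordinate gives, for every $T\subseteq\tilde P$, that $\conv(\{a,b\}\cup T)\cap H=\conv(\{a\}\cup T)\cap H=\conv(\{b\}\cup T)\cap H=\conv(T)$. Classifying the $(d+2)$-subsets of $P'$ by how many of $a,b$ they contain, it follows that: among those containing both $a$ and $b$, exactly $N$ contain $q'$; among those containing exactly one of $a,b$, exactly $\sigma_P(\ori)$ contain $q'$ for each of the two choices; and among those containing neither, exactly $\sigma_{\tilde P}(q')$ contain $q'$ (that last quantity being the $(d+1)$-dimensional simplicial depth of $q'$ with respect to the coplanar set $\tilde P$, i.e.\ the number of $(d+2)$-subsets of $P$ whose convex hull contains $\ori$ in $\R^d$). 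Therefore
\[ \sigma_{P'}(q') \;=\; N + 2\,\sigma_P(\ori) + \sigma_{\tilde P}(q'), \]
so $N=\sigma_{P'}(q')-2\,\sigma_P(\ori)-\sigma_{\tilde P}(q')$ is recoverable from three simplicial-depth queries --- $\sigma_{P'}(q')$ and $\sigma_{\tilde P}(q')$ in dimension $d+1$, and $\sigma_P(\ori)$ in dimension $d$ --- and we output \textsc{yes} iff $N>0$.

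This reduction runs in polynomial time (the gadget and final arithmetic are trivial, the work being the three oracle calls), and each simplicial-depth instance it produces has dimension at most $d+1$, a function of the source parameter $d$; hence it is an fpt Turing reduction, which establishes that computing the simplicial depth is W[1]-hard with respect to $d$ (in particular not fixed-parameter tractable unless $\mathrm{FPT}=\mathrm{W}[1]$). I expect the crux to be the clean decomposition of $\sigma_{P'}(q')$: the two auxiliary points must be placed on the \emph{same} side of $H$, because two points on opposite sides would put $q'$ on the segment joining them and hence inside every simplex using both, flooding the count; and one must check that a $(d+2)$-subset of $P'$ containing $q'$ falls into exactly one of the three listed cases. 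A further, perhaps unavoidable, subtlety is that the reduction is genuinely Turing rather than many-one: by \Caratheodory's theorem $\sigma_{P'}(q')>0$ is equivalent to $q'\in\conv(P')$, which is decidable in polynomial time, so merely testing whether one simplicial depth is nonzero cannot decide $d$-\Caratheodory --- the exact values $\sigma_P(\ori)$ and $\sigma_{\tilde P}(q')$ are needed to subtract off the spurious contributions.
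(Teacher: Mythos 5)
Your construction is correct and follows the same basic strategy as the paper's: lift $P$ one dimension, attach two extra points on a line through the query, and do a Turing reduction by case-analyzing $(d+2)$-subsets according to how many auxiliary points they use. The one substantive difference is your choice of lift. You place $\tilde P$ in the hyperplane $H=\{x_{d+1}=0\}$ \emph{containing} the query $q'=(\ori,0)$, which means the ``neither $a$ nor $b$'' case contributes a nonzero, degenerate term $\sigma_{\tilde P}(q')$ (the count of $(d+2)$-subsets of $P$ whose convex hull in $\R^d$ contains $\ori$) and forces a third oracle call to a point set lying entirely in a hyperplane through the query point. The paper instead lifts $P$ to $x_{d+1}=1$ and keeps the query at the origin, so the lifted copy of $P$ is strictly separated from the query and the ``neither'' case contributes exactly zero; this removes both the extra oracle call and the degeneracy (if one cares about keeping the oracle's inputs in reasonable general position, your third query is exactly the kind of input one might want to avoid). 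You could obtain the paper's version from yours verbatim by translating by one unit in $x_{d+1}$ and placing $a,b$ on the far side of $\tilde P$ from the origin; the identity then collapses to $\sigma_Q(\ori)=k_d+2\sigma_P(\ori)$ with only two oracle calls.

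On the other hand your write-up is more careful than the paper's in one respect: you explicitly verify that the $d$-\Caratheodory question (existence of an affinely independent $d$-subset containing $\ori$) is equivalent to $k_d>0$ after discarding the case $\dim\operatorname{aff}(P)\le d-2$, via a Carath\'eodory-plus-extension argument. The paper treats $k_d$ as counting $(d-1)$-simplices and does not spell out why counting all $d$-subsets with $\ori$ in their hull suffices; it is implicitly relying on a general-position assumption. Your observation that the reduction is necessarily Turing rather than many-one (since $\sigma>0$ is decidable in polynomial time via linear programming) is also correct and worth noting.

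One small caution on presentation: you reuse the symbol $\tilde P$ for the lift to $x_{d+1}=0$ and the symbol $P'$ for the whole instance; the paper uses $\tilde P$ for the lift to $x_{d+1}=1$ and $Q$ for the combined set, and also uses primes for subsets. If this proof were to replace the paper's, the notation would need to be reconciled, and the formula $\sigma_Q(\ori)=2k_{d+1}+k_{d}+2(|P|-d)k_{d}$ corrected (the paper has a typo writing $k_{d-1}$ where $k_d$ is meant).
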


\begin{proof}
Assume we have access to an oracle that, given a query point $q$ and a
set $Q \subset \R^d$, returns $\sigma_Q(q)$. We show that
$\#d$-\Caratheodory can be decided with two oracle queries.

Let $k_d$ denote the number of
$(d-1)$-simplices with vertices in $P$ that contain $\ori$ in their convex hulls
and let $k_{d+1}$ denote the number of $d$-simplices with vertices in $P$ that
contain $\ori$ in their interior. Then $\sigma_P(\ori)$ can be written as $(|P| - d)
k_d + k_{d+1}$. We want to decide whether $k_d > 0$.
For each point $p \in P$ let $\tilde{p} \in \R^{d+1}$ denote the
$(d+1)$-dimensional point that is obtained by appending a $1$-coordinate and
similarly, for each subset $P' \subset P$ let
$\tilde{P'}$ denote the set $\{\tilde{p} \mid p \in P'\} \subset \R^{d+1}$. We
denote with
$S$ the set $\{(0,\dots,0,-1)^T, (0,\dots,0,-2)^T\} \subset \R^{d+1}$ and set
$Q = \tilde{P} \cup S$. Again, we want to express $\sigma_Q(\ori)$ as a function of
$k_d$ and $k_{d+1}$.  Let $Q'\subset Q$, $|Q'| = d+2$, be a subset that contains
$\ori$ in its convex hull. Clearly, $Q'$ has to contain a point from $S$. Let
$\tilde{P'} = Q' \cap \tilde{P}$ denote the part from $\tilde{P}$ and let $S' = Q'
\cap S$ denote the part from $S$. By construction of $S$, we have
$(0,\dots,0,1)^T \in \conv(\tilde{P'})$ and hence $\ori \in \conv(P')$. That is,
each $(d+2)$-simplex with vertices in $Q$ that contains $\ori$ in its
convex hull corresponds to either a $d$-simplex or a $(d-1)$-simplex with
vertices in $P$ that contains $\ori$ in its convex hull. Consider now a set $P'
\subset P$ with $|P'| = d+1$ and $\ori \in \conv(P)$. Then, the corresponding set
$\tilde{P'}$ can be extended in two ways to a subset $Q' \subset Q$, $|Q'|=d+2$,
with $\ori \in \conv(Q')$ by taking either point in $S$. On the
other hand, if $P' \subset P$ is a subset of size $d$ with $\ori \in \conv(P')$,
then we can extend $\tilde{P'}$ to a set $Q' \subset Q$, $|Q'| = d+2$, with $\ori \in
\conv(Q')$ by either taking both points in $S$ or by taking one arbitrary point
in $\tilde{P} \setminus \tilde{P'}$ and either point in $S$. Hence, we have
$\sigma_Q(\ori) = 2 k_{d+1} + k_{d-1} + 2 (|P| - d) k_{d-1}$. Since $k_d =
\sigma_Q(\ori) - 2 \sigma_P(\ori)$, we can decide whether $k_d > 0$ with two
oracle queries.
\end{proof}

The following theorem is now immediate. 

\begin{theorem}\label{thm:complexity}
Let $P \subset \R^d$ be a set of $d$-dimensional points and $q \in \R^d$ a query
point. Then, computing $\sigma_P(q)$ is \#P-complete and W[1]-hard with respect
to the parameter $d$.
\end{theorem}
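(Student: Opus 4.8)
The plan is to obtain Theorem~\ref{thm:complexity} by simply assembling the two parts already established, after checking that nothing is missing on the membership side. For the \#P claim I would first verify membership in \#P: the auxiliary problem ``given $P$, $q$, and a candidate $(d+1)$-subset $P'\subseteq P$, decide whether $q\in\conv(P')$'' is a single linear-feasibility test and is solvable in time polynomial in the total input size even when $d$ is part of the input. Hence $\sigma_P(q)$ is exactly the number of accepting certificates of a polynomial-time verifier, so the function lies in \#P. Together with the \#P-hardness supplied by Proposition~\ref{prop:sph} (the reduction from counting perfect matchings via the matching polytope, its polar, and the Gale transform), this gives \#P-completeness.

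For the parameterized statement, W[1]-hardness with respect to $d$ is precisely the content of Proposition~\ref{prop:w1h}: its two-query argument turns an oracle for simplicial depth into a decision procedure for $d$-\Caratheodory, which Knauer et al.\ proved W[1]-hard in $d$. I would just note that both oracle instances there live in ambient dimension $d+1=d+O(1)$ and have size polynomial in the input, so this constitutes a legitimate fpt Turing reduction; no new construction is needed.

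The only points I would take care to spell out are that the two reductions are of the appropriate types — a polynomial-time (parsimonious, or at worst polynomial-time Turing, which already suffices for \#P-completeness) reduction in Proposition~\ref{prop:sph}, and a parameter-preserving fpt reduction in Proposition~\ref{prop:w1h} — and that all coordinates produced (vertices of the perfect-matching polytope, of its polar, and of the corresponding Gale transforms from Lemma~\ref{lem:gale}) have polynomially bounded bit-length, so the reductions are genuinely polynomial in the bit model. Granting the two propositions as stated, the theorem then follows with essentially no further work; the real effort, and the main obstacle, lies entirely in the reductions behind those propositions, in particular the use of the Gale transform to convert facet-counting of a polytope into counting simplices that contain $\ori$.
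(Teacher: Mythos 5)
Your proposal matches the paper's approach exactly: the paper states the theorem is ``now immediate'' after Propositions~\ref{prop:sph} and~\ref{prop:w1h}, and you assemble those same two propositions, adding only sensible (and implicit in the paper) sanity checks on \#P-membership, bit-length, and reduction type. Correct, and no substantive deviation from the paper's argument.
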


We conclude the section with
constructive result: although computing the simplicial
depth is \#P-complete, it is possible to determine the parity in polynomial-time.

\begin{proposition}
Let $P \subset \R^d$ be a set of points and $q \in \R^d$ a query point.
If $n-d-1$ is odd or $\binom{n}{d}$ is even, then $\sigma_P(q)$ is even.
Otherwise, $\sigma_P(q)$ is odd.
\end{proposition}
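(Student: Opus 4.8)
\emph{Proof plan.} Normalize so that $q=\ori$; if $q\notin\conv(P)$ then $\sigma_P(q)=0$ is even and there is nothing to check, so assume $\sigma_P(q)>0$ and write $m=n-d-1$. The plan splits according to the parity of $m$, and in each branch I would lean on one of the two structural descriptions of $\Delta_P(\ori)$ developed earlier: the simplicial graph $G_P(\ori)$ and the Gale transform $\bar P$.

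Suppose first that $m$ is odd; here we must show $\sigma_P(\ori)$ is even. This is immediate from Lemma~\ref{lem:simgraph}: $G_P(\ori)$ has exactly $\sigma_P(\ori)$ vertices and is $m$-regular, so summing degrees gives $m\cdot\sigma_P(\ori)=2\,|E(G_P(\ori))|$; as $m$ is odd, $\sigma_P(\ori)$ must be even. No assumption on $\binom{n}{d}$ is needed in this branch, which matches the ``or'' in the statement.

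Now suppose $m$ is even. By Lemma~\ref{lem:gale}, $\sigma_P(\ori)$ equals the number of facets of $\conv(\bar P)\subset\R^{m}$, which by general position is a simplicial $m$-polytope, so its boundary is a simplicial $(m-1)$-sphere. I would then bring in the Dehn--Sommerville relations $h_i=h_{m-i}$ for its $h$-vector: since the number of facets equals $\sum_{i=0}^{m}h_i$ and $m$ is even, all terms cancel in mirror pairs except the central one, giving $\sigma_P(\ori)\equiv h_{m/2}\pmod 2$. Expanding $h_{m/2}$ through the standard conversion $h_k=\sum_{i=0}^{k}(-1)^{k-i}\binom{m-i}{k-i}f_{i-1}$ (with $f_{-1}=1$), and using a second mirror cancellation, now at the level of the $g$-vector, would in the favourable case collapse the surviving contributions to a single binomial coefficient, which one would then match to $\binom{n}{d}\bmod 2$ via Lucas' theorem, recovering the ``otherwise'' clause.

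The step I expect to be the genuine obstacle is exactly the evaluation of $h_{m/2}\bmod 2$. The number of facets of a simplicial $m$-polytope on a given number of vertices is \emph{not} a function of $n$ and $m$ alone, so the argument cannot rest on any closed formula for $\sigma_P(\ori)$; it must instead exhibit the configuration-dependent parts of the $h$- (equivalently $g$-) vector as occurring in pairs modulo $2$, so that only an $(n,d)$-expression survives. One must also handle the subtlety that not every $\bar p_i$ need be a vertex of $\conv(\bar P)$, so the effective vertex count may differ from $n$, and check that this does not perturb the parity. If this route proves too delicate, the fallback I would attempt is a direct parity argument on the $(d+1)$-subsets that do \emph{not} contain $\ori$: each such subset is separated from $\ori$ by a hyperplane, and one could try to build a fixed-point-free (or few-fixed-point) involution on these ``bad'' subsets from a canonically chosen Radon-type partition, so that $\sigma_P(\ori)\equiv\binom{n}{d+1}-(\text{number of fixed points})\pmod 2$, with the fixed-point count carrying the dependence on $n-d-1$ and $\binom{n}{d}$.
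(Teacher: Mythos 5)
Your first branch ($n-d-1$ odd) matches the paper exactly: $(n-d-1)$-regularity of the simplicial graph gives $(n-d-1)\sigma_P(\ori)=2|E|$, hence $\sigma_P(\ori)$ even. Your second branch, however, is a sketch of a program that does not close, not a proof. You correctly flag the obstruction yourself: the number of facets of a simplicial $m$-polytope is not determined by $(n,m)$ alone, so nothing in Dehn--Sommerville alone can force $h_{m/2}\bmod 2$ to depend only on $(n,d)$; when $m$ is even the central entry $h_{m/2}$ is unpaired and there is no identity fixing it modulo $2$. You would also have to control the fact that some $\bar p_i$ may fail to be vertices of $\conv(\bar P)$. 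The fallback involution idea is likewise left entirely open.

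The paper closes the even branch with a lifting trick that reduces it to the odd branch. Append a final coordinate $1$ to every $p\in P$ to get $\tilde P\subset\R^{d+1}$, and set $Q=\tilde P\cup\{(0,\dots,0,-1)^T,(0,\dots,0,2)^T\}$. Then $|Q|-(d+1)-1=n-d$ is odd, so $\sigma_Q(\ori)$ is even by your case 1. Counting the $(d+2)$-subsets $Q'$ with $\ori\in\conv(Q')$: either $Q'$ contains both new points (the segment joining them already passes through $\ori$, so any $d$ further lifted points work, giving $\binom{n}{d}$ such $Q'$), or $Q'$ contains exactly the point $(0,\dots,0,-1)^T$ together with $d+1$ lifted points, which happens iff those $d+1$ original points contain $\ori$ in their convex hull, giving $\sigma_P(\ori)$ such $Q'$; the other single new point cannot occur alone since it lies on the same side of $x_{d+1}=0$ as all of $\tilde P$. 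Hence $\sigma_Q(\ori)=\sigma_P(\ori)+\binom{n}{d}$ and so $\sigma_P(\ori)\equiv\binom{n}{d}\pmod 2$. This is the step missing from your proposal.
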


\begin{proof}
We assume w.l.o.g.\ that $q$ is the origin.
Since the simplicial graph $G_P(\ori)$ is $(n-d-1)$-regular, the product $(n-d-1)|V| =
(n-d-1)\sigma_P(\ori)$ is even.
If $(n-d-1)$ is odd, $\sigma_P(q)$ has to be even.
Assume now $(n-d-1)$ is even. We construct a new point set $Q$ in
$\R^{d+1}$ similar as in the proof of Proposition~\ref{prop:w1h}.
Let $R$ denote the set $\{(0,\dots,0,-1)^T, (0,\dots,0,2)^T\} \subset \R^{d+1}$
and set $Q = \tilde{P} \cup R \subset \R^{d+1}$, where $\tilde{P}$ is defined as
in the proof of Proposition~\ref{prop:w1h}.
Let us now consider the graph $G_Q(\ori)$. Since $n-d-1$ is even,
$(|Q|-(d+1)-1) = n - d$ is odd. Now, $G_Q(\ori)$ is $(n-d)$-regular and thus
$\sigma_Q(\ori)$ is even.
Let $Q' \subset Q$, $|Q|=d+2$, be a subset that contains the origin in its convex
hull. Then either (i) $R \subset Q'$ or (ii) $Q'$ contains the point $r =
(0,\dots,0,-1)^T \in R$ and $d+1$ points $\tilde{P'} \subseteq \tilde{P}$ with
$(0,\dots,0,1)^T \in \conv(\tilde{P'})$. There are $\binom{n}{d}$ sets $Q'$ with
Property (i) and $\sigma_P(\ori)$ sets $Q'$ with Property (ii). Hence,
we have
$\sigma_Q(\ori) = \sigma_P(\ori) + \binom{n}{d}$ is even
and thus $\sigma_P(\ori)$ is odd iff $\binom{n}{d}$ is odd.
\end{proof}

\paragraph*{Acknowledgements.}
This work was initiated while YS was visiting \\MADALGO in Aarhus. I would like to
thank the working group for their hospitality and for a constructive
atmosphere.

\end{document}